\newif\if@restonecol
\renewcommand{\Pr}{{\text{Pr}}}
\newtheorem{theorem}{Theorem}[section]
\newtheorem{lemma}[theorem]{Lemma}
\newtheorem{program}[theorem]{Program}
\newtheorem{proposition}[theorem]{Proposition}
\def\comp{\overline}
\def\inj{\text{inj}}
\def\aut{\text{aut}}
\def\ext{\text{ext}}
\def\tinj{t_{\text{inj}}}
\def\Gnp{G_{n,p}}
\def\Pr{\text{Pr}}
\newcommand{\omt}[1]{}
\newcommand{\xhdr}[1]{\paragraph*{\bf #1}}
\title{Subgraph Frequencies: Mapping the Empirical and Extremal Geography of Large Graph Collections}
\author{
\alignauthor
Johan Ugander\\
       \affaddr{Cornell University}\\
       \affaddr{Ithaca, NY}\\
       \email{jhu5@cornell.edu}
\alignauthor
Lars Backstrom\\ 
       \affaddr{Facebook}\\
       \affaddr{Menlo Park, CA}\\
       \email{lars@fb.com}
\alignauthor
Jon Kleinberg\\ 
       \affaddr{Cornell University}\\
       \affaddr{Ithaca, NY}\\
       \email{kleinber@cs.cornell.edu}
}
\date{}
\begin{document}
\maketitle
\begin{abstract}
A growing set of on-line applications are generating data that 
can be viewed as very large collections of small, dense social 
graphs --- these range from sets of social groups, events, 
or collaboration projects to the vast collection of graph neighborhoods
in large social networks. A natural question is how to usefully define
a domain-independent `coordinate system' for such a collection
of graphs, so that the set of possible structures can be compactly
represented and understood within a common space. 
In this work, we draw on the theory of graph homomorphisms to 
formulate and analyze such a representation, based on computing
the frequencies of small induced subgraphs within each graph. 
We find that the space of subgraph frequencies is governed both
by its combinatorial properties --- based on extremal results that 
constrain all graphs --- as well as by its empirical properties --- 
manifested in the way that real social graphs appear to lie near 
a simple one-dimensional curve through this space.

We develop flexible frameworks for studying each of these 
aspects. For capturing empirical properties, we characterize
a simple stochastic generative model, a single-parameter extension 
of Erd\H{o}s-R\'{e}nyi random graphs,
whose stationary distribution over subgraphs closely tracks 
the one-dimensional concentration of the real social graph families. For 
the extremal properties, we develop a tractable linear program 
for bounding the feasible space of subgraph frequencies 
by harnessing a toolkit of known extremal graph theory. 
Together, these two complementary frameworks shed light
on a fundamental question pertaining to social graphs: 
what properties of social graphs are `social' properties and
what properties are `graph' properties?

We conclude with a brief demonstration of how the coordinate
system we examine can also be used to 
perform classification tasks, distinguishing between 
structures arising from different types of social graphs.
\end{abstract}

\vspace{0.05in}
\noindent
{\bf Categories and Subject Descriptors:}
H.2.8 [{\bf Database Management}]: Database applications---{\em Data mining}

\noindent
{\bf Keywords:}
Social Networks, Triadic Closure, Induced Subgraphs, Subgraph Census, Graph Homomorphisms.

\section{Introduction}

The standard approach to modeling a large on-line social network is to
treat it as a single graph with an enormous number of nodes and a sparse
pattern of connections.
Increasingly, however, many of the key problems
encountered in managing an on-line social network involve working with
large collections of small, dense graphs contained within the network.

On Facebook, for example, the set of people belonging to a group or
attending an event determines such a graph, and considering 
the set of all groups or all events leads to a very large number
of such graphs.
On any social network, the network neighborhood of each individual --- consisting of
his or her friends and the links among them --- is also generally
a small dense graph with a rich structure, on a few hundred nodes or 
fewer \cite{UKBM12}.
If we consider the neighborhood of each user as defining a distinct graph,
we again obtain an enormous collection of graphs.
Indeed, this view of a large underlying
social network in terms of its overlapping node neighborhoods
suggests a potentially valuable perspective on the analysis of the network:
rather than thinking of Facebook, for example, as a single billion-node
network, with a global structure that quickly becomes incomprehensible,
we argue that it can be useful to think of it
as the superposition of a billion small dense graphs --- the network
neighborhoods, one centered at each user, and each
accessible to a closer and more tractable investigation.

Nor is this view limited to a site such as Facebook; 
one can find collections of small dense graphs in the interactions
within a set of discussion forums \cite{welser},
within a set of collaborative on-line projects \cite{voss-wikipedia}, 
and in a range of other settings.

Our focus in the present work is on a fundamental global question
about these types of graph collections:
given a large set of small dense graphs, can we study 
this set by defining a meaningful `coordinate system' on it,
so that the graphs it contains can be represented and
understood within a common space?
With such a coordinate system providing a general-purpose framework for
analysis, additional questions become possible.
For example, 
when considering collections of a billion or more social graphs, it
may seem as though almost any graph is possible; 
is that the case,
or are there underlying properties guiding the observed structures? 
And how do these properties relate to more fundamental 
combinatorial constraints deriving from the extremal limits that
govern all graphs?
As a further example,
we can ask how different graph collections compare to one another;
do network neighborhoods differ in some systematic way, for instance, 
from social graphs induced by other contexts, such as the graphs
implicit in social groups, organized events, or other arrangements?

\paragraph*{\bf The Present Work}
In this paper we develop and analyze such a representation,
drawing on the theory of {\em graph homomorphisms}.
Roughly speaking, the coordinate system we examine
begins by describing a graph by the frequencies with
which all possible small subgraphs occur within it.
More precisely, we choose a small number $k$ (e.g. $k = 3$ or $4$);
then, for each graph $G$ in a collection, 
we create a vector with a coordinate for each 
distinct $k$-node subgraph $H$, specifying the fraction of $k$-tuples
of nodes in $G$ that induce a copy of $H$ (in other words, the frequency of 
$H$ as an induced subgraph of $G$). 
For $k=3$, this description corresponds
to what is sometimes referred to as the {\em triad census}
\cite{Davis1967,faust2007,faust2010,wasserman-faust}. 
The literature on {\em frequent subgraph mining} 
\cite{Inokuchi2000,kuramochi2004,yan2002}, and {\em motif counting}
\cite{Milo2002} is also is closely related, but focuses on connected subgraphs.

With each graph in the collection mapped to such a vector, we can ask
how the full collection of graphs fills out this space of
subgraph frequencies.
This turns out to be a subtle issue, because the arrangement of 
the graphs in this space is governed by two distinct sets of
effects: extremal combinatorial constraints showing that certain 
combinations of subgraph frequencies are genuinely impossible;
and empirical properties, which reveal that the bulk of the 
graphs tend to lie close to a simple one-dimensional curve
through the space.
We formulate results on both these types of properties, in the former
case building on an expanding body of combinatorial theory
\cite{borgs,lovasz} for bounding the 
frequencies at which different types of subgraphs can occur
in a larger ambient graph.

The fact that the space of subgraph frequencies is constrained
in these multiple ways also allows us to concretely address the 
following type of question:
When we see that human social networks do not exhibit a certain type
of structure, is that because such a structure is mathematically
impossible, or simply because human beings do not create it when
they form social connections?  In other words, what is a property
of graphs and what is a property of people?
Although this question is implicit in many studies of social networks,
it is hard to separate the two effects without a formal framework
such as we have here.

Indeed, our framework offers a direct contribution to
one of the most well-known observations about social graphs: 
the tendency of social relationships to close triangles, and the 
relative infrequency of what is 
sometimes called the `forbidden triad': 
three people with two social relationships between them, 
but one absent relationship \cite{rapoport-triadic}.
There are many sociological theories for why one would expect this
subgraph to be underrepresented in empirical social networks 
\cite{granovetter-weak-ties}.
Our framework 
shows that the frequency of this `forbidden triad' 
has a non-trivial upper bound in not just social graphs, 
but in all graphs.
Harnessing our framework more
generally, we are in fact able to show that {\em any} $k$ node 
subgraph that is not a complete or empty subgraph
has a frequency that is bounded away from one. 
Thus, there is an extent to which almost all subgraphs are
mathematically `forbidden' from occurring beyond
 a certain frequency.

We aim to separate these mathematical limits of graphs from the
complementary empirical properties of real social graphs. The fact that
real graph collections have a roughly one-dimensional 
structure in our coordinate system leads directly to our first main question: 
is it possible to succinctly characterize the underlying backbone for this 
one-dimensional structure, and can we use such a 
characterization to usefully describe
graphs within our coordinate system in terms of their
deviation from this backbone?

The subgraph frequencies of the standard
Erd\H{o}s-R\'{e}nyi random graph \cite{bollobas-rand-graphs-book} 
$G_{n,p}$ produce a one-dimensional curve
(parametrized by $p$) that weakly approximates the layout of the real
graphs in the space, but the curve arising from this random graph model
systematically deviates from the real graphs in that the random graph
contains fewer triangles and more triangle-free subgraphs.
This observation is consistent with the sociological principle
of {\em triadic closure} --- 
that triangles tend to form in social networks. 
As a means of closing this deviation from $G_{n,p}$, 
we develop a tractable stochastic model of graph generation
with a single additional parameter, determining the relative rates of
arbitrary edge formation and triangle-closing edge formation.
The model exhibits rich behaviors, and for appropriately chosen 
settings of its single parameter, 
it produce remarkably close 
agreement 
with the
subgraph frequencies observed in real data for the suite of
all possible 3-node and 4-node subgraphs.

Finally, we use this representation to study how different collections
of graphs may differ from one another.
This arises as a question of basic interest in the analysis of large
social media platforms, where users continuously manage multiple audiences 
\cite{BBKLR11} ---
ranging from their set of friends, to the members of a groups they've joined,
to the attendees of events and beyond.
Do these audiences differ from each other at a structural level, and
if so what are the distinguishing characteristics?
Using Facebook data, we identify structural differences between
the graphs induced on network neighborhoods, groups, and events.
The underlying basis for these differences suggests corresponding
distinctions in each user's reaction to these different audiences with whom
they interact.

\section{Data description}
\label{sec:data}

Throughout our presentation, we analyze several collections of graphs
collected from Facebook's social network. The collections we study
are all induced graphs from the Facebook friendship graph, which
records friendship connections as undirected edges between users, 
and thus all our induced graphs are also undirected. 
The framework we characterize in this work would naturally extend to 
provide insights about directed graphs, an extension we do not discuss.
We do not include edges formed by Facebook `subscriptions' in our study, 
nor do we include Facebook `pages' or connections from users 
to such pages. 
All Facebook social graph data was analyzed in an anonymous, aggregated form.

For this work, we extracted three different collections of graphs, around which we
organize our discussion: 
\begin{itemize}[leftmargin=10pt]
\itemsep1pt 
\parskip2pt 
\parsep2pt
\labelsep4pt
\item {\bf Neighborhoods}: Graphs induced by the friends 
of a single Facebook user {\it ego} and the friendship 
connections among these individuals (excluding the ego).
\item {\bf Groups}: Graphs induced by the members of a 
`Facebook group', a Facebook feature for organizing focused 
conversations between a small or moderate-sized set of users.
\item {\bf Events}: Graphs induced by the confirmed attendees of 
`Facebook events', a Facebook feature  for coordinating invitations
to calendar events. Users can response `Yes', `No', and `Maybe' to 
such invitations, and we consider only users who respond `Yes'.
\end{itemize}

The neighborhood and groups collections were 
assembled in October 2012 based on monthly active user egos 
and current groups, while the events data was collected from 
all events during 2010 and 2011. For event graphs, only friendship 
edges formed prior to the date of the event were used. 
Subgraph frequencies
for four-node subgraphs were computed 
by sampling 11,000 induced subgraphs
uniformly with replacement, providing sufficiently precise
frequencies without enumeration.
The graph collections were targeted at a variety of different 
graph sizes, as will be discussed in the text.

\begin{figure*}[t]
\begin{center}
\vspace*{-0.225in}
\includegraphics[width=0.9\linewidth]{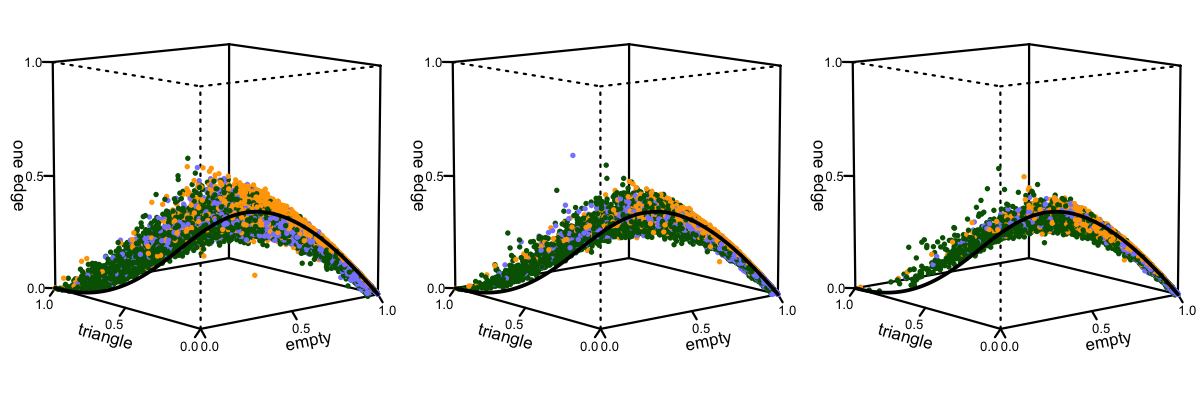} 
\vspace*{-0.375in}
\caption{Subgraph frequencies for three node subgraphs for 
graphs of size 50, 100, and 200 (left to right). The
neighborhoods are orange, groups are green, and events are lavender.
The black curves illustrate $\Gnp$ as a function of $p$.
\vspace*{-0.3in}
}
\label{f:3d}
\end{center}
\end{figure*}

\section{Subgraph space}
\label{sec:triads}

In this section, we study the space of subgraph frequencies
that form the basis of our coordinate system, and the
one-dimensional concentration of empirical graphs within
this coordinate system. We derive a model capable of accurately 
identifying the backbone of this empirical concentration using
only the basic principle of {\em triadic closure},
showing how the subgraph frequencies of empirical social graphs
are seemingly restricted to the vicinity of a simple one-dimensional
structure.

Formally, the subgraph frequency of a $k$-node graph $F$ in an 
$n$-node graph $G$ (where $k \leq n$) is the probability that
a random $k$-node subset of $G$ induces a copy of $F$.
It is clear that for any integer $k$, the subgraph frequencies
of all the $k$-node graphs sum to one, constraining the vector
of frequencies to an appropriately dimensioned simplex. 
In the case of $k=3$, this vector is simply the relative frequency
of induced three-node subgraphs restricted to the 4-simplex;
there are just four such subgraphs,
with zero, one, two, and three edges respectively. 
When considering
the frequency of larger subgraphs, the dimension of the simplex
grows very quickly, and already for $k=4$, 
the space of four-node subgraph frequencies lives in an 11-simplex.

\xhdr{Empirical distribution}
In Figure \ref{f:3d}, the three-node subgraph frequencies of
50-node, 100-node, and 200-node graph collections are shown, with each
subplot showing a balanced mixture of 17,000 
neighborhood, group and event graphs -- the three
collections discussed in Section \ref{sec:data}, totaling 51,000 graphs
at each size. Because
these frequency vectors are constrained to the 4-simplex, their distribution
can be visualized in $\mathbb R^3$ with three of the frequencies
as axes. 

Notice that these graph collections, induced from disparate contexts, 
all occupy a sharply concentrated subregion of the unit simplex. The
points in the space have been represented simply as an unordered
scatterplot, and two striking phenomena already stand out:
first, the particular concentrated structure within 
the simplex that the points follow; and second, the fact that
we can already discern a non-uniform distribution
of the three contexts (neighborhoods, groups and events) within the space ---
that is, the different contexts can already be seen to have
different structural loci. Notice also that as the sizes of the 
graphs increases -- from 50 to 100 to 200 -- the distribution
appears to sharpen around the one-dimensional backbone. 
The vast number of graphs that we are able
to consider by studying Facebook data is here illuminating 
a structure that is 
simply not discernible in previous examinations of subgraph
frequencies \cite{faust2010}, since no analysis has previously considered a collection
near this scale.

The imagery of Figure \ref{f:3d} directly motivates our work,
by visually framing the essence of our investigation: what facets of this
curious structure derive from our graphs being social graphs, and
what facets are simply universal properties of all graphs?
We will find, in particular, that parts of the space of subgraph
frequencies are in fact inaccessible to graphs for purely
combinatorial reasons --- it is mathematically impossible for one of the points
in the scatterplot to occupy these parts of the space.
But there are other parts of the space that are mathematically possible;
it is simply that no real social graphs appear to be located within them.
Intuitively, then, we are looking at a population density within an
ambient space (the Facebook graphs within the space of 
subgraph frequencies), and we would like to understand both 
the 
geography of the inhabited terrain (what are the
properties of the areas where the population has in fact settled?)
and also 
the properties of the boundaries of the space as a whole
(where, in principle, would it be possible for the population to settle?).

Also in Figure \ref{f:3d}, we plot the curve for the
frequencies for 3 node subgraphs in $\Gnp$ as a function of $p$. 
The curves are given simply by the probability of obtaining the desired
number of edges in a three node graph, $((1-p)^3,3p(1-p)^2,3p^2(1-p),p^3)$. 
This curve closely tracks the empirical density through the space,
with a single notable discrepancy: 
the real world graphs systemically contain more 
triangles when compared to $\Gnp$ at the 
same edge density. 
We emphasize that it is not a priori clear why $\Gnp$ would
at all be a good model of subgraph frequencies in modestly-sized
dense social graphs such as the neighborhoods, groups, and events
that we have here;
we believe the fact that it tracks the data 
with any fidelity at all is an interesting issue
for future work.
Beyond $\Gnp$, in the following subsection, we present a stochastic model
of edge formation and deletion on graphs specifically designed
to close the remaining discrepancy. As such, our model provides a means of
accurately characterizing the backbone of
subgraph frequencies for social graphs. 

\begin{figure*}[t]
\begin{center}
\vspace*{-0.175in}
\includegraphics[width=0.75\linewidth]{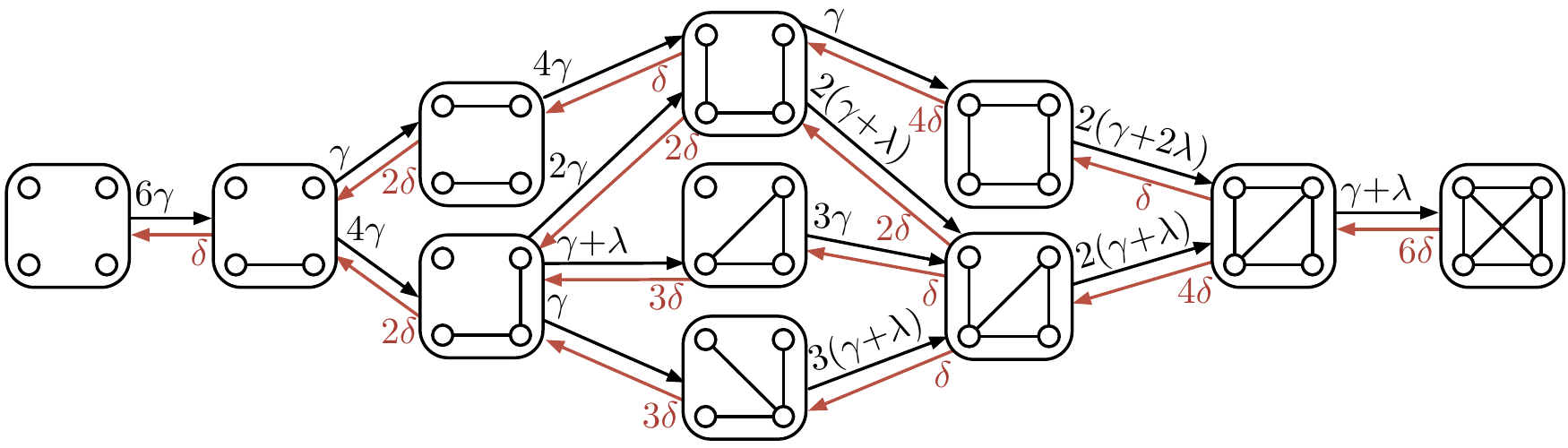} 
\vspace*{-0.15in}
\caption{
The state transitions diagram for our stochastic graph model
with $k=4$, where 
$\gamma$ is the arbitrary edge formation rate,
$\lambda$ is the triadic closure formation rate,
and $\delta$ is the edge elimination rate.
\vspace*{-0.3in}
}
\label{f:transitions}
\end{center}
\end{figure*}

\xhdr{Stochastic model of edge formation}
The classic Erd\H{o}s-R\'{e}nyi model of random graphs, $\Gnp$,
produces a distribution over $n$-node undirected graphs defined by a simple
parameter $p$, the probability of each edge independently appearing 
in the graph. We now introduce and analyze a related 
random graph model, the {\em Edge Formation Random Walk}, defined
as a random walk over the space of all unlabeled $n$-node graphs. 
In its simplest form,
this model is closely related to $\Gnp$, and will we show via detailed balance
that the distribution defined by $\Gnp$ on $n$-node graphs
is precisely the stationary distribution of this simplest version
of the random walk on the space of $n$-node graphs. 
We first describe this basic version of the model; we then 
add a component to the model that captures a triadic closure process,
which produces a close fit to the properties we observe in real graphs.

Let $\mathcal G_n$ be the space of all unlabeled $n$-node graphs, and 
let $X(t)$ be the following continuous 
time Markov chain on the state space $\mathcal G_n$.
The transition rates between the graphs in $\mathcal G_n$ are defined by
random additions and deletions of edges, with all edges having a uniform formation
rate $\gamma>0$ and a uniform deletion rate $\delta>0$. 
Thus the single parameter $\nu = \gamma / \delta$, the effective formation rate of edges,
completely characterizes the process.
Notice that this process is clearly irreducible, 
since it is possible to transition between any two graphs via edge additions
and deletions. 

Since $X(t)$ is irreducible, it possesses a unique stationary distribution.
The stationary distribution of an irreducible continuous time Markov chain 
can be found as the unique stable fixed point of the 
linear dynamical system $X'(t) = Q_n(\nu)X(t)$ that describes the diffusion
of probability mass during a random walk on $n$-node graphs,
where $Q_n(\nu)$ is the generator matrix
with transition rates $q_{ij}$ and $q_{ii} = - \sum_{j \ne i} q_{ji}$, all depending
only on $\nu$. The stationary distribution $\pi_n$ then satisfies $Q_n(\nu)\pi_n = 0$.

The following proposition shows the clear relationship between the stationary distribution
of this simplest random walk and the frequencies of $\Gnp$.
\begin{proposition}
\label{prop:balance}
The probabilities assigned to (unlabeled) graphs by $\Gnp$
satisfy the detailed balance condition
for the Edge Formation Random Walk with edge
formation rate $\nu = p / 1 - p$, and thus characterizes
the stationary distribution. 
\end{proposition}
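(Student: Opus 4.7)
The plan is to verify the detailed balance condition $\pi(G)\,q(G,G') = \pi(G')\,q(G',G)$ for every pair of adjacent unlabeled graphs $G, G' \in \mathcal G_n$, where $\pi$ denotes the law of $\Gnp$ on $\mathcal G_n$. Since the only transitions in the walk are single-edge additions and deletions, it suffices to check the condition for pairs with $e(G') = e(G)+1$. Combined with the already-noted irreducibility of $X(t)$, detailed balance then yields that $\pi$ is the unique stationary distribution.

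First I would pass through the labeled picture to eliminate automorphism bookkeeping. On labeled $n$-vertex graphs, $\Gnp$ assigns the product measure $\tilde\pi(\tilde G) = p^{e(G)}(1-p)^{\binom{n}{2}-e(G)}$, and the labeled walk adds (resp.\ deletes) any specified edge at rate $\gamma$ (resp.\ $\delta$). The labeled detailed balance equation $\tilde\pi(\tilde G)\,\gamma = \tilde\pi(\tilde G + e)\,\delta$ collapses in one line to $\gamma(1-p) = \delta p$, equivalently $\nu = p/(1-p)$.

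Lifting this to the unlabeled chain is the main step. Writing $|G| = n!/|\mathrm{Aut}(G)|$ for the number of labelings of $G$, one has $\pi(G) = |G|\,p^{e(G)}(1-p)^{\binom{n}{2}-e(G)}$. Let $N(G,G')$ denote the number of ordered pairs $(\tilde G, \tilde G')$ of labeled representatives with $\tilde G' = \tilde G + e$ for some edge $e$. A short double-counting argument --- using that the number of single-edge additions at a representative $\tilde G$ producing a labeled graph isomorphic to $G'$ is independent of the chosen $\tilde G$ by the symmetry of the labeled process --- yields
\begin{equation*}
q(G,G') = \frac{\gamma\,N(G,G')}{|G|}, \qquad q(G',G) = \frac{\delta\,N(G,G')}{|G'|}.
\end{equation*}
Substituting these expressions along with $\pi(G)$ and $\pi(G')$ into the unlabeled detailed balance equation, the factors $|G|$, $|G'|$, and $N(G,G')$ all cancel, and the condition again reduces to $\gamma(1-p) = \delta p$.

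The main (mildly subtle) obstacle is this double-counting formula: the symmetry of $N(G,G')$ in its two arguments is precisely what causes the automorphism factors in $\pi(G)$ and $\pi(G')$ to cancel against those implicit in $q(G,G')$ and $q(G',G)$. Once the formula is in hand, the proposition reduces to the one-line labeled computation, and irreducibility promotes ``stationary'' to ``unique stationary.''
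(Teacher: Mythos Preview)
Your proposal is correct and follows essentially the same approach as the paper: both reduce the problem to the labeled chain, where detailed balance is a one-line check of $\gamma(1-p)=\delta p$, and then lift back to unlabeled graphs. The only cosmetic difference is in the lift: the paper argues qualitatively that the unlabeled chain is the projection of the labeled chain onto isomorphism classes (so the unlabeled stationary law is the pushforward of the labeled one), whereas you compute the unlabeled rates and probabilities explicitly via $|G|=n!/|\mathrm{Aut}(G)|$ and the symmetric count $N(G,G')$ and watch the automorphism factors cancel---these are two equivalent ways of expressing the same lumpability fact.
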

\begin{proof}
We first describe an equivalent Markov chain based on labeled graphs:
there is a state for each labeled $n$-node graph;
the transition rate $q_{ij}$ from a labelled graph $G_i$ to a labelled graph $G_j$
is $q_{ij}=\gamma$ if $G_j$ can be obtained from $G_i$ by adding an edge;
and 
$q_{ij}=\delta$ if $G_j$ can be obtained from $G_i$ by removing an edge. All 
other transition rates are zero.
We call this new chain the {\em labeled chain}, and the original
chain the {\em unlabeled chain.}

Now, suppose there is a transition from unlabeled graph $H_a$ to unlabeled
graph $H_b$ in the unlabeled chain, with transition probability
$k \gamma$.
This means that there are $k$ ways to add an edge to a labeled copy of
$H_a$ to produce a graph isomorphic to $H_b$.
Now, let $G_i$ be any graph in the labeled chain that is isomorphic to $H_a$.
In the labeled chain, there are $k$ transitions out of $G_i$ leading
to a graph isomorphic to $H_b$, and each of these has probability 
$\gamma$.  Thus, with probability $k \gamma$, a transition out of $G_i$
leads to a graph isomorphic to $H_b$.
A strictly analogous argument can be made for edge deletions, rather
than edge additions.

This argument shows that the following describes a Markov chain 
equivalent to the original unlabeled chain: we draw a sequence of
labeled graphs from the labeled chain, and we output the isomorphism
classes of these labeled graphs.  Hence, to compute the stationary
distribution of the original unlabeled chain, which is what we seek,
we can compute the stationary distribution of the labeled chain and
then sum stationary probabilities in the labeled chain
over the isomorphism classes of labeled graphs.

It thus suffices to verify the detailed balance condition
for the distribution on the labeled chain that assigns probability
$p^{|E(G_i)|}(1-p)^{{n \choose 2} - |E(G_i)|}$ to each labeled graph $G_i$.
Since every transition of the labeled walk
occurs between two labeled graphs $G_i$ and $G_j$, 
with $|E(G_i)| = |E(G_j)| +1$, 
the only non-trivial detailed balance equations are of the form:
\begin{eqnarray*}
q_{ij} \Pr[X(t) = G_i] &=& q_{ji} \Pr[X(t) =G_j]\\
 \Pr[X(t) = G_i] &=& \nu \Pr[X(t) =G_j] \\
\Pr[X(t) =G_i] &=& \frac{p}{1-p} \Pr[X(t) =G_j].
\end{eqnarray*}
Since the probability assigned to the labeled graph $G_i$ by $\Gnp$
is simply $p^{|E(G_i)|}(1-p)^{{n \choose 2} - |E(G_i)|}$,
detailed balance is clearly satisfied.
\end{proof}

\xhdr{Incorporating triadic closure}
The above modeling framework provides a simple analog of $\Gnp$ that 
notably exposes itself to subtle adjustments. By simply adjusting the 
transition rates between select graphs, this framework makes it possible
to model random graphs where certain types of edge
formations or deletions have irregular probabilities of occurring, simply via
small perturbations away from the classic $\Gnp$ model. 
Using this principle, we now characterize a 
random graph model that differs from $\Gnp$ by a single parameter,
$\lambda$, the rate at which 3-node paths in the graph tend to form triangles.
We call this model the {\em Edge Formation Random Walk with Triadic Closure}.

Again let $\mathcal G_n$ be the space of all unlabeled $n$-node graphs, and 
let $Y(t)$ be a continuous time Markov chain on the state space $\mathcal G_n$.
As with the ordinary Edge Formation Random Walk,
let edges have a uniform formation
rate $\gamma > 0$ and a uniform deletion rate $\delta > 0$, but now also 
add a triadic closure formation rate $\lambda \ge 0$ for every 
3-node path that a transition would close.
The process is still clearly irreducible, and the stationary distribution obeys
the stationary conditions $Q_n(\nu,\lambda) \pi_n =0$, where the generator
matrix $Q_n$ now also depends on $\lambda$. We can express the
stationary distribution directly in the parameters as 
$\pi_n(\nu,\lambda) = \{\pi: Q_n(\nu,\lambda)\pi=0\}$.
For $\lambda=0$ the model reduces to the ordinary Edge Formation
Random Walk.   

The state transitions of this random graph model are easy to construct
for $n=3$ and $n=4$, and transitions for the case of $n=4$
are shown in Figure \ref{f:transitions}. Proposition \ref{prop:balance}
above tells us
that for $\lambda=0$, the stationary distribution of a random walk
on this state space is given by the graph
frequencies of $\Gnp$. As we increase $\lambda$ away from zero,
we should therefore expect to see a stationary distribution that
departs from $\Gnp$ precisely by observing
more graphs with triangles and less graphs with open triangles. 

The framework of our Edge Formation Random Walk makes it possible
to model triadic closure precisely;
in this sense the model forms an interesting contrast with 
other models of triangle-closing in graphs that
are very challenging to analyze (e.g. \cite{chatterjee2011,jin-girvan-newman,leskovec2005,park2005,strauss1986}).
We will now show how 
the addition of this single parameter makes it possible to describe
the subgraph frequencies of empirical social graphs with remarkable
accuracy.

\xhdr{Fitting subgraph frequencies}

The stationary distribution of an Edge Formation Random Walk model
describes the frequency of different graphs, while the coordinate system
we are developing focuses on the frequency of $k$-node subgraphs within
$n$-node graphs. For $\Gnp$ these two questions are in fact the same,
since the distribution of random induced $k$-node subgraphs of $\Gnp$ 
is simply $G_{k,p}$. When we introduce $\lambda>0$, however, our model
departs from this symmetry, and the stationary probabilities in a
random walk on $k$ node graphs is no longer precisely the frequencies of
induced $k$-node subgraphs in a single $n$-node graph. 

But if we view this as a model for the frequency of small graphs
as objects in themselves, rather than as subgraphs of a larger 
ambient graph, the model provides a 
highly tractable parameterization 
that we can use to approximate the structure of subgraph frequencies 
observed in our families of larger graphs.
In doing so, we aim to fit $\pi_k(\nu(p,\lambda),\lambda)$
as a function of $p$, 
where $\nu(p,\lambda)$ is the rate parameter $\nu$ 
that produces edge density $p$ for the specific value of $\lambda$.
For $\lambda=0$ this relationship is simply $\nu = p/(1-p)$, but 
for $\lambda>0$
the relation is not so tidy, and in practice it is easier to fit $\nu$ numerically
rather than evaluate the expression.

When considering a collection of graph frequencies 
we can fit $\lambda$ by minimizing residuals with respect to the
model. Given a collection of $N$ graphs, let $y_k^1, \ldots, y_k^N$ 
be the vectors of $k$-node subgraph frequencies for each graph and
$p^1, \ldots, p^N$ be the edge densities. We can then 
fit $\lambda$ as:
$$
\lambda_k^{opt} = \arg \min_\lambda \sum_{i=1}^N 
|| \pi_k(\nu(p^i,\lambda),\lambda) -  y_k^i ||_2.
$$

\begin{figure}[t]
\begin{center}
\includegraphics[width=0.85\linewidth]{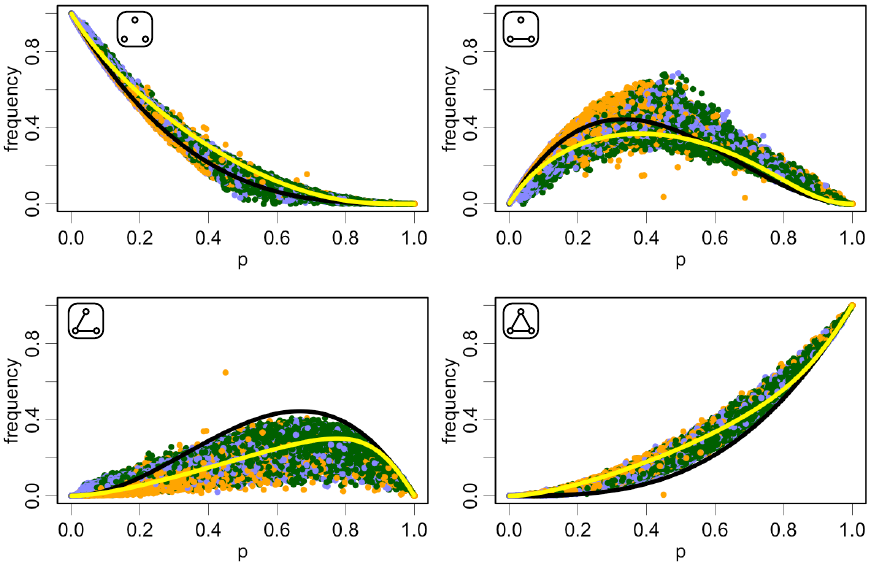} 
\vspace*{-0.2in}
\caption{Subgraph frequencies for 3-node subgraphs in 
50-node graphs, shown as a function of $p$.
The black curves illustrate $\Gnp$,
while the yellow curves illustrate the fit model. 
\vspace*{-0.2in}
}
\label{f:triadplambda}
\end{center}
\end{figure}

\begin{figure}[t]
\begin{center}
\includegraphics[width=0.75\columnwidth]{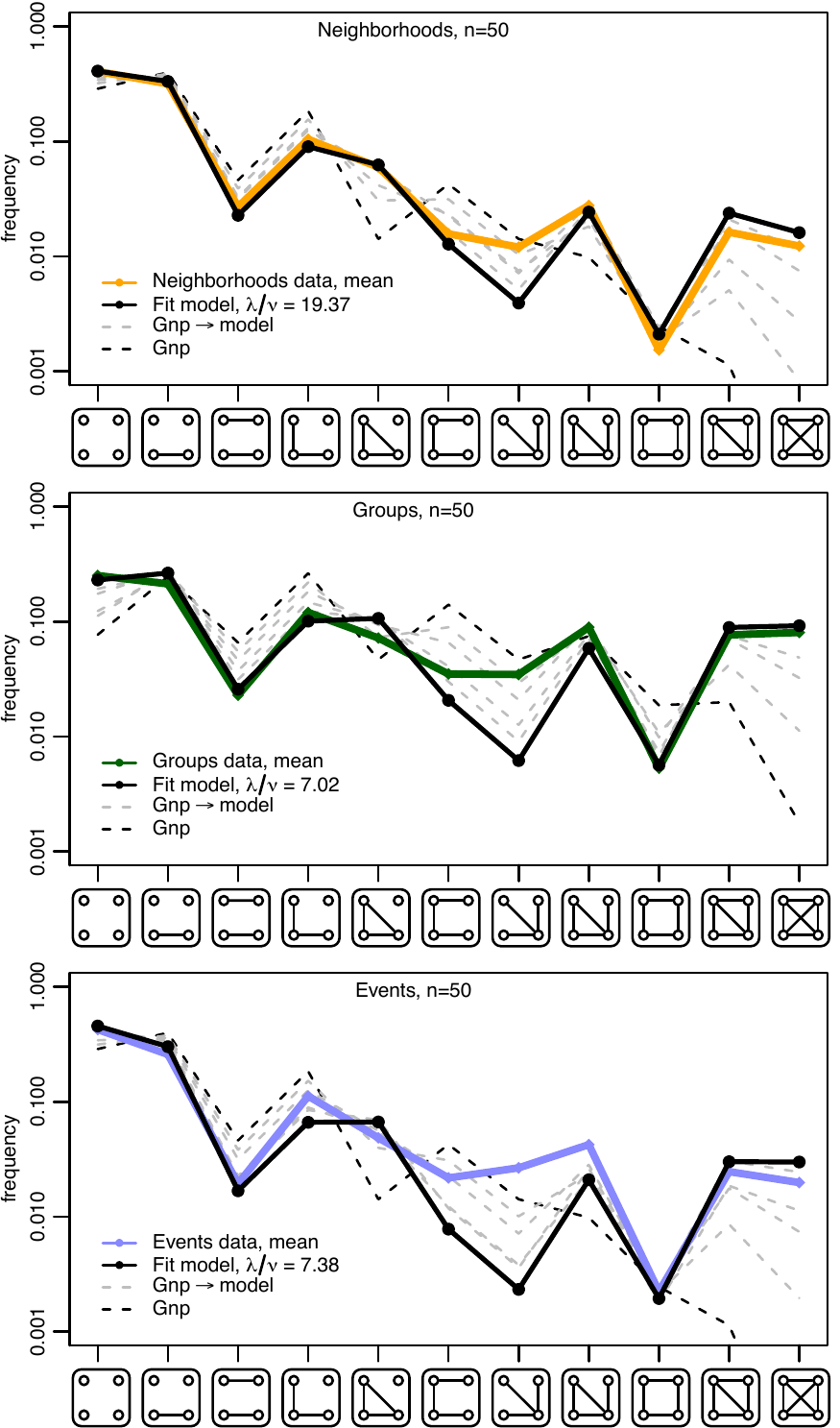} 
\vspace*{-0.1in}
\caption{
The four-node subgraph frequencies for the 
means of the 50-node graph collections in Figure \ref{f:triadplambda}, 
and the subgraph frequency of the model,
fitting the triadic closure rate $\lambda$ to the mean vectors.
As $\lambda$ increases from $\lambda=0$ to $\lambda=\lambda_{opt}$,
 we see how this single additional parameter provides a striking fit.
\vspace*{-0.2in}
}
\label{f:quadfit}
\end{center}
\end{figure}

In Figure \ref{f:triadplambda} we plot the three-node subgraph frequencies 
as a function of edge density $p$, for a collection of 300,000 50-node subgraphs, 
again a balanced mixture of neighborhoods, groups, and events. 
In this figure we also plot (in yellow) the curve resulting from fitting our random walk
model with triadic closure,
$\pi_k(\nu(p,\lambda^{opt}_k),\lambda^{opt}_k)$, which is thus parameterized
as a function of edge density $p$. For this mixture of collections and $k=3$,
the optimal fit is $\lambda^{opt}_3 =1.61$.
Notice how the yellow line deviates from the black
$\Gnp$ curve to better represent the backbone of natural graph frequencies. 
From the figure it is clear that almost all graphs have more triangles than
a sample from $\Gnp$ of corresponding edge density. When describing
extremal bounds in Section \ref{sec:bounds}, we will discuss how $\Gnp$
is in fact by no means the extremal lower bound. 

As suggested by Figure \ref{f:transitions}, examining the subgraph frequencies
for four-node subgraphs is fully tractable. In Figure \ref{f:quadfit}, we fit $\lambda$
to the mean subgraph frequencies of our three different collections of graphs separately.
Note that the mean of the subgraph frequencies over a set of graphs is not necessarily
itself a subgraph frequency corresponding to a graph, but we fit these mean 11-vectors
as a demonstration of the model's ability to fit an `average' graph.
The subgraph frequency of $\Gnp$ at the edge density corresponding to the
data is shown as a black dashed line in each plot --- with poor agreement --- 
and gray dashed lines illustrate an
incremental transition in $\lambda$, starting from zero (when
it corresponds to $\Gnp$) and ending at $\lambda^{opt}$.

The striking agreement between the fit model and the mean of each collection is 
achieved at the corresponding edge density by fitting only $\lambda$. For neighborhood
graphs, this agreement deviates measurably on only a single subgraph frequency, the
four-node star. The y-axis is plotted on a logarithmic scale, which makes it rather 
remarkable how precisely the model describes the scarcity of the four-node cycle. The
scarcity of squares has been previously observed in 
email neighborhoods on Facebook \cite{ugander2012}, 
and our model provides the first intuitive explanation of this scarcity.

The model's ability to characterize the backbone of the empirical graph frequencies
suggests that the subgraph frequencies of individual graphs can be usefully
studied as deviations from this backbone. 
In fact, we can interpret the fitting procedure for $\lambda$ 
as a variance minimization procedure. Recall that the mean of a set of
points in $\mathbb R^n$ is the point that minimizes the sum
 of squared residuals. In this way,  the procedure is in fact fitting the 
`mean curve' of the model distribution to the empirical subgraph frequencies.

Finally, our model can be used to provide
a measure of the triadic closure strength differentially
between graph collections, investigating the difference in $\lambda^{opt}$ 
for the subgraph frequencies of different graph collections. 
In Figure \ref{f:quadfit}, the three different graph types resulted
in notably different ratios of $\lambda/\nu$ --- the ratio of the
triadic closure formation rate to the basic process rate ---
with a significantly higher value for this ratio in neighborhoods.
We can interpret this as saying that open triads in neighborhoods are 
more prone to triadic closure than open triads in
groups or events.

\section{Extremal bounds}
\label{sec:bounds}

As discussed at the beginning of the previous section, 
we face two problems in analyzing the subgraph frequencies of real graphs:
to characterize the distribution of values we observe in practice, and
to understand the combinatorial structure of the overall space in which
these empirical subgraph frequencies lie.
Having developed stochastic models to address the former question, 
we now consider the latter question.

Specifically, in this section we characterize 
extremal bounds on the set of possible subgraph frequencies.
Using machinery from the theory of graph homomorphisms,
we identify fundamental bounds on the 
space of subgraph frequencies that are not properties of 
social graphs, but rather, are universal properties of all graphs.
By identifying these bounds, we make apparent large tracts
of the feasible region that are theoretically inhabitable but not populated by any of the
empirical social graphs we examine.

We first review a body of techniques
based in extremal graph theory and the theory of graph homomorphisms
\cite{lovasz}.
We use these techniques to formulate a set of inequalities on subgraph frequencies;
these inequalities 
are all linear for a fixed edge density, an observation that allows
us to cleanly construct a linear program to maximize and minimize 
each subgraph frequency within the combined constraints. In this manner,
we show how it is possible to map outer bounds on the geography of all 
these structural constraints. We conclude by offering two basic
propositions that transcend all edge densities, thus identifying fundamental 
limits on subgraph frequencies of all sizes.

\subsection{Background on subgraph frequency and homomorphism density}

In this subsection, we review some background arising from the theory of
graph homomorphisms.
We will use this homomorphism machinery to 
develop inequalities governing 
subgraph frequencies. These inequalities allow us to 
describe the outlines of the space underlying
Figure \ref{f:3d}(a) --- the first step in understanding
which aspects of the distribution of subgraph frequencies 
in the simplex are the result of empirical 
properties of human social networks, and
which are the consequences of purely combinatorial constraints.

\xhdr{Linear constraints on subgraph frequency}
Let $s(F,G)$ denote the subgraph frequency of $F$ in $G$, as
defined in the last section:
the probability that a random $|V(F)|$-node subset of $G$ induces
a copy of $F$.
Note that since $s(F,G)$ is a probability over outcomes, 
it is subject to 
the law of total probability. The law of total probability for subgraph 
frequencies takes the following form.

\begin{proposition}
\label{p:linear}
For any graph $F$ and any integer $\ell \ge k$, where $|V(F)|=k$, 
the subgraph density of $F$ in $G$, $s(F,G)$ satisfies the equality
$$
s(F,G) = \sum_{\{H:|V(H)|=\ell\}} s(F,H)s(H,G).
$$
\end{proposition}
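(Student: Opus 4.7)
The plan is to give a direct two-stage sampling argument that realizes the right-hand side as the law of total probability expansion of the left-hand side, conditioned on the isomorphism type of an intermediate vertex subset.

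First, I would fix $G$ on $n$ vertices and consider the following two-stage procedure for choosing a random $k$-subset of $V(G)$. In stage one, pick a uniformly random $\ell$-subset $S \subseteq V(G)$; in stage two, pick a uniformly random $k$-subset $T \subseteq S$. The first task is to verify that the marginal distribution of $T$ is uniform over $k$-subsets of $V(G)$. This is a short counting check: for any fixed $k$-subset $T_0$, the probability $\Pr[T = T_0]$ equals
\[
\sum_{\substack{S_0 \supseteq T_0 \\ |S_0|=\ell}} \frac{1}{\binom{n}{\ell}} \cdot \frac{1}{\binom{\ell}{k}} \;=\; \frac{\binom{n-k}{\ell-k}}{\binom{n}{\ell}\binom{\ell}{k}} \;=\; \frac{1}{\binom{n}{k}},
\]
which is uniform. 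Consequently, $\Pr[G[T] \cong F]$ under this two-stage procedure is exactly $s(F,G)$.

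Next, I would apply the law of total probability by conditioning on the isomorphism type of the induced subgraph $G[S]$. Since there are only finitely many isomorphism classes of $\ell$-node graphs $H$, we can write
\[
s(F,G) \;=\; \sum_{\{H:\,|V(H)|=\ell\}} \Pr[G[S] \cong H] \cdot \Pr[G[T] \cong F \,\big|\, G[S] \cong H].
\]
The unconditional probability $\Pr[G[S] \cong H]$ is, by definition of subgraph frequency applied to the $\ell$-subset selection, precisely $s(H,G)$. For the conditional term, I would observe that given $G[S] \cong H$, the distribution of the induced subgraph $G[T]$ is the same as the induced subgraph on a uniformly random $k$-subset of the vertices of $H$, because the stage-two draw is uniform over $k$-subsets of $S$ and isomorphism preserves induced subgraph types. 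Hence this conditional probability equals $s(F,H)$.

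Substituting these identifications yields the claimed equality. The only step that requires any care is confirming that the two-stage procedure produces a uniformly random $k$-subset and that the conditional distribution of $G[T]$ given the isomorphism class of $G[S]$ depends only on that class; neither is difficult, but this is where the argument must be stated cleanly to avoid circularity. Everything else is a direct application of the law of total probability.
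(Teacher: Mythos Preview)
Your proposal is correct and follows essentially the same approach as the paper: both arguments condition on the isomorphism type of a uniformly random $\ell$-vertex induced subgraph and invoke the law of total probability. Your version is simply more explicit, spelling out the two-stage sampling and the counting identity $\binom{n-k}{\ell-k}\big/\bigl(\binom{n}{\ell}\binom{\ell}{k}\bigr)=1/\binom{n}{k}$ that the paper leaves implicit.
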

\begin{proof}
Let $H'$ be a random $\ell$-vertex induced subgraph of $G$. 
Now, the set of outcomes $\mathcal H = \{H :|V(H)|=\ell\}$ form 
a partition of the sample space, each with probability $s(H,G)$. 
Furthermore, conditional upon an $\ell$-vertex induced subgraph 
being isomorphic to $H$, $s(F,H)$ is the probability that a random $k$-vertex 
induced subgraph of $H$ is isomorphic to $F$.
\end{proof}

This proposition characterizes an important property of 
subgraph frequencies: the vector of subgraph frequencies on $k$
nodes exists in a linear subspace of the vector of subgraph frequencies
on $\ell > k$ nodes. Furthermore, this means that any constraint
on the frequency of a subgraph $F$ will also constrain the frequency of 
any subgraph $H$ for which $s(F,H) >0$ or $s(H,F)>0$.

\xhdr{Graph homomorphisms}
A number of fundamental inequalities on the occurrence of subgraphs
are most naturally formulated in terms of {\em graph homomorphisms},
a notion that is connected to but distinct from the notion of induced subgraphs.
In order to describe this machinery, we first review some basic definitions
\cite{borgs}.
if $F$ and $G$ are labelled graphs, a map $f : V(F) \rightarrow V(G)$
is a {\em homomorphism} if each edge $(v,w)$ of $F$ maps to an edge
$(f(v),f(w))$ of $G$.
We now write $t(F,G)$ for the probability
that a random map from $V(F)$ into $V(G)$ is a homomorphism,
and we refer to $t(F,G)$ as a {\em homomorphism density} of $F$ and $G$.

There are three key differences between the homomorphism density
$t(F,G)$ and the subgraph frequency $s(F,G)$ defined earlier in this section.
First, $t(F,G)$ is based on mappings of $F$ into $G$ that can
be many-to-one --- multiple nodes of $F$ can map to the same node of
$G$ --- while $s(F,G)$ is based on one-to-one mappings.
Second, $t(F,G)$ is based on mappings of $F$ into $G$ that must
map edges to edges, but impose no condition on pairs of nodes in $F$
that do not form edges: in other words, a homomorphism is allowed
to map a pair of unlinked nodes in $F$ to an edge of $G$.
This is not the case for $s(F,G)$, which is based on maps that
require non-edges of $F$ to be mapped to non-edges of $G$.
Third, $t(F,G)$ is a frequency among mappings from labeled
graphs $F$ to labelled graphs $G$, while $s(F,G)$ is a frequency
among mappings from unlabeled $F$ to unlabeled $G$. 

From these three differences, it is not difficult to write down a
basic relationship governing the functions $s$ and $t$
\cite{borgs}.
To do this, it is useful to define the intermediate notion 
$\tinj(F,G)$, which is the probability that a random {\em one-to-one}
map from $V(F)$ to $V(G)$ is a homomorphism. 
Since only an $O(1/V(G))$ fraction of all maps from 
$V(F)$ to $V(G)$ are not one-to-one, we have
\begin{equation}
\label{eq:hominj1}
t(F,G) = \tinj(F,G) + O(1/|V(G)|).
\end{equation}
Next, by definition,
a one-to-one map $f$ of $F$ into $G$ is a homomorphism
if and only if the image $f(F)$, when viewed as an induced subgraph
of $G$, contains all of $F$'s edges and possibly others.
Correcting also for the conversion from labelled to unlabeled graphs, 
we have
\begin{equation}
\label{eq:hominj2}
\tinj(F,G) = \sum_{F':F \subseteq F'} \frac{\ext(F,F') \cdot \aut(F')}{k!} \cdot s(F',G),
\end{equation}
where $\aut(F')$ is the number of automorphisms of $F'$ and $\ext(F,F')$
is the number of ways that a labelled graph $F$ can be extended (by adding edges) to form a labelled
graph $H$ isomorphic to $F'$.

\xhdr{Homomorphism inequalities}
There are a number of non-trivial results bounding 
the graph homomorphism density, which we now review. 
By translating these to
the language of subgraph frequencies, we can begin to
develop bounds on the simplexes in
Figure~\ref{f:3d}.

For complete graphs, the Kruskal-Katona Theorem produces upper bounds
on homomorphism density in terms of the edge density while the Moon-Moser
Theorem provides lower bounds, also in terms of the edge density. 
\begin{proposition}[Kruskal-Katona \cite{lovasz}]
For a complete graph $K_r$ on $r$ nodes and graph $G$ with edge density $t(K_2,G)$,
$$t(K_r,G) \le t(K_2,G)^{r/2}.$$
\end{proposition}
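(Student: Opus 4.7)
The plan is to reduce the homomorphism-density inequality to the classical Kruskal-Katona theorem on the number of $r$-cliques in a graph, then finish with a short algebraic step.

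First I would unpack $t(K_r,G)$ as a clique count. Because $K_r$ is loopless and the adjacency matrix of $G$ vanishes on the diagonal, any map $f:V(K_r)\to V(G)$ contributing to $\mathrm{hom}(K_r,G)=\sum_f \prod_{ij\in E(K_r)} A(f(i),f(j))$ must be injective. Hence $\mathrm{hom}(K_r,G) = r!\cdot \kappa_r(G)$, where $\kappa_r(G)$ denotes the number of $K_r$-cliques in $G$; writing $n=|V(G)|$ and $m=|E(G)|$,
$$
t(K_r,G) \;=\; \frac{r!\,\kappa_r(G)}{n^r}, \qquad t(K_2,G) \;=\; \frac{2m}{n^2}.
$$

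Second, I would invoke the Kruskal-Katona theorem in its real-variable (Lov\'{a}sz) form: for every graph $G$ with $m$ edges, $\kappa_r(G)\le \binom{x}{r}$, where $x\ge 0$ is the unique real number satisfying $\binom{x}{2}=m$, i.e.\ $x(x-1)=2m$. This is the substantive combinatorial input, and I would cite it from \cite{lovasz} rather than reprove it.

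Third, I would bound $\binom{x}{r}\le (x(x-1))^{r/2}/r!$ by pairing the factors of $\binom{x}{r}=\prod_{i=0}^{r-1}(x-i)/r!$\,: pair $(x-i)$ with $(x-(r-1-i))$, each such pair having sum $2x-r+1$, and when $r$ is odd treat the surviving middle factor $x-(r-1)/2=(2x-r+1)/2$ on its own. AM--GM then yields $\prod_{i=0}^{r-1}(x-i)\le \bigl((2x-r+1)/2\bigr)^{r}$, and the target reduces to the elementary inequality $(2x-r+1)^2\le 4x(x-1)$, equivalently $(r-1)^2\le 4x(r-2)$, which holds once $x\ge r-1$. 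Assembling the three steps,
$$
t(K_r,G) \;=\; \frac{r!\,\kappa_r(G)}{n^r} \;\le\; \frac{(x(x-1))^{r/2}}{n^r} \;=\; \left(\frac{2m}{n^2}\right)^{r/2} \;=\; t(K_2,G)^{r/2}.
$$

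The main obstacle is Kruskal-Katona itself, which is the nontrivial extremal input; once that is granted, the remainder is clean bookkeeping. The only minor technicality is the range $x<r-1$, but then $m<\binom{r-1}{2}<\binom{r}{2}$, so $G$ contains no $K_r$ whatsoever, $\kappa_r(G)=0$, and the target inequality is automatic.
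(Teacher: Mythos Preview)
The paper does not prove this proposition at all: it is quoted as a known background result from \cite{lovasz}, used later as an input to the linear program. So there is no ``paper's proof'' to compare against; you have supplied an argument where the authors simply cite.

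Your reduction is correct. Writing $t(K_r,G)=r!\,\kappa_r(G)/n^r$ via injectivity of clique homomorphisms, invoking the Lov\'asz form of Kruskal--Katona ($\kappa_r(G)\le\binom{x}{r}$ with $\binom{x}{2}=m$), and then showing $\prod_{i=0}^{r-1}(x-i)\le (x(x-1))^{r/2}$ for $x\ge r-1$ is exactly the right chain, and your boundary case $x<r-1\Rightarrow m<\binom{r}{2}\Rightarrow\kappa_r(G)=0$ closes the gap. Two small remarks. First, the AM--GM detour through $((2x-r+1)/2)^r$ is unnecessary: you can argue directly that for each pair $i+j=r-1$ one has $(x-i)(x-j)=x^2-(r-1)x+ij\le x^2-x=x(x-1)$ whenever $ij\le (r-2)x$, and since $ij\le (r-1)^2/4$ this is the same threshold you obtained; multiplying the pairwise bounds gives the product inequality without the intermediate step. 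Second, your inequality $(r-1)^2\le 4x(r-2)$ is vacuously false at $r=2$, but the proposition is an identity there, so you should say that the algebraic argument is for $r\ge 3$ and $r=2$ is trivial.
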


\begin{proposition}[Moon-Moser \cite{MM62, Razborov08}]
For a complete graph $K_r$ on $r$ nodes and graph $G$ with edge density 
$t(K_2,G) \in [(k-2)/(k-1),1]$,
$$t(K_r,G) \ge \prod_{i=1}^{r-1} (1-i(1-t(K_2,G))).$$
\end{proposition}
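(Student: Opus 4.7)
The plan is to prove the inequality by induction on $r$. The base case $r = 2$ is the trivial identity $t(K_2, G) = 1 - (1 - t(K_2, G))$; the case $r = 3$ already illustrates the main idea and can be verified directly via a neighborhood-counting argument akin to the standard triangle-density lower bound.

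For the inductive step, I would work in the graphon (or equivalently, homomorphism density) framework and exploit a vertex-pivot identity: conditioning on one of the $r$ randomly chosen vertices, the $K_r$-density of $W$ decomposes as an integral over the pivot $x$ of the pivot's degree $d_W(x)$ raised to the $(r-1)$st power times the $K_{r-1}$-density of $W$ under the ``link'' measure at $x$ (the conditional distribution on the remaining $r-1$ vertices given that each is adjacent to $x$). The finite-graph analogue is the familiar double-counting $r \cdot |K_r(G)| = \sum_{v} |K_{r-1}(G[N(v)])|$. Applying the inductive hypothesis to each link gives a pointwise bound in terms of the local edge density $p_x$, and averaging over the pivot with respect to the original measure then yields an integrated bound that can be compared to the target via a Jensen-type convexity argument applied to $q \mapsto \prod_{i=1}^{r-2}(1 - i q)$ on the relevant range.

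The main obstacle will be handling the edge-density hypothesis across the induction: the global bound $p \ge (r-2)/(r-1)$ does not automatically imply the threshold $p_x \ge (r-3)/(r-2)$ required to invoke the inductive hypothesis at every pivot $x$. To overcome this I would reformulate the product bound so that it becomes non-positive precisely when a local density falls below the corresponding threshold, ensuring it remains a valid lower bound on the non-negative quantity $t(K_{r-1}, \cdot)$ for such pivots, so that the averaging step still delivers the correct global bound. Equality is achieved on the Tur\'an graph $T(n, r-1)$ (equivalently, the balanced complete $(r-1)$-partite graphon), and verifying that this extremal example saturates each step of the induction both validates the approach and pins down the precise constants that appear in the Jensen computation. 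Alternatively, one could bypass the threshold difficulty altogether by appealing to the more recent flag-algebra-based proofs that give the same bound, but the inductive neighborhood-averaging strategy above is the most self-contained and keeps the connection to the Kruskal-Katona analogue explicit.
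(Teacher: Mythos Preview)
The paper does not prove this proposition; it is quoted, with citations to Moon--Moser and Razborov, in the background subsection reviewing homomorphism inequalities, alongside the Kruskal--Katona and Sidorenko bounds, none of which receive proofs in the paper. There is therefore no argument in the paper to compare yours against.

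For what it is worth, your outline is the classical Moon--Moser induction, and the pivot identity $\hom(K_r,G)=\sum_v \hom(K_{r-1},G[N(v)])$ together with a convexity/averaging step is how the standard argument proceeds. One caution regarding the threshold obstacle you correctly flag: the raw product $\prod_{i=1}^{r-2}(1-i(1-p_x))$ is not automatically non-positive for all $p_x<(r-3)/(r-2)$ --- once $p_x$ drops below $(r-4)/(r-3)$ a second factor changes sign and the product becomes positive again, and this alternation continues. If by ``reformulate'' you mean replacing the right-hand side by its positive part and carrying that strengthened, everywhere-valid inequality through the induction, the fix works; if you meant that the unmodified product is already non-positive below threshold, that is false and the step would fail for pivots with very sparse links. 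With the positive-part formulation (or an equivalent truncation) your plan is sound.
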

The Moon-Moser bound is well known to not be sharp,
and Razborov has recently given an 
impressive sharp lower bound for the homomorphism density of the triangle
$K_3$ \cite{Razborov08} using sophisticated machinery \cite{Razborov07}. 
We limit our discussion to the simpler Moon-Moser lower bound which 
takes the form of a concise polynomial and provides bounds for arbitrary $r$, not just 
the triangle ($r=3$).

Finally, we employ a powerful inequality that
is known to lower bound the homomorphism 
density of any graph $F$ that is either a forest, an even cycle, or a complete
bipartite graph. Stated as such, it is the solved special cases of the open
 {\it Sidorenko Conjecture}, 
which posits that the result could be extended to all bipartite graphs $F$.
We will use the following proposition in particular when $F$ is a tree,
and will refer to this part of the result as the 
{\em Sidorenko tree bound}.

\begin{proposition}[Sidorenko \cite{lovasz,Sidorenko93}]
For a graph $F$ that is a forest, even cycle, or complete bipartite graph, 
with edge set E(F), and $G$ with edge density $t(K_2,G)$,
$$
t(F,G)
\ge t(K_2,G)^{|E(F)|}.$$
\end{proposition}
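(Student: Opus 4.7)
The plan is to work with the probabilistic representation $t(F,G) = \mathbb{E}_{\phi}\prod_{(i,j)\in E(F)}\mathbf{1}[\phi(i)\phi(j)\in E(G)]$, where $\phi$ is a uniform random map $V(F)\to V(G)$, and to handle each of the three families of $F$ in turn. Since $t(F_1\sqcup F_2, G) = t(F_1,G)\, t(F_2,G)$ and edge counts add across components, for forests it suffices to handle the single-tree case; and the claim is trivial when $F$ has no edges. Throughout I write $p(x) := d_G(x)/|V(G)|$ for the normalized degree at $x$, so that $\mathbb{E}_x p(x) = t(K_2,G)$.

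For trees, I would induct on $|E(T)|$. The base case $T = K_2$ gives equality. For a tree $T$ with at least two edges, pick a non-leaf vertex $u$ all but at most one of whose neighbors are leaves (always possible), list the leaf neighbors as $v_1,\ldots,v_s$, and let $T' = T - \{v_1,\ldots,v_s\}$. Integrating out $\phi(v_1),\ldots,\phi(v_s)$ in the expectation gives
$$t(T, G) = \mathbb{E}_{\phi'}\!\left[\, p(\phi'(u))^s \cdot \mathbf{1}\{\phi' \text{ is a hom.\ of } T'\}\,\right].$$
The main obstacle, and the hardest step, is controlling the positive correlation between $p(\phi'(u))^s$ and the homomorphism event: simply splitting the expectation into a product of averages would not give the right bound. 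The cleanest resolution is to strengthen the inductive hypothesis to a rooted, weighted version, asserting that for any tree $T'$ with distinguished vertex $u$ and any non-negative power $s$, $\mathbb{E}_{\phi'}\!\left[p(\phi'(u))^s\, \mathbf{1}\{\phi'\}\right] \ge t(K_2,G)^{|E(T')|+s}$; this is maintained inductively by combining the weaker hypothesis on the ``trunk'' at $u$ with Jensen's inequality applied to the convex map $y\mapsto y^s$.

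For even cycles $C_{2k}$, I would decompose $C_{2k}$ as two internally disjoint $k$-edge paths sharing a common pair of endpoints $x,y$. Conditioned on $(x,y)$, the two path images are independent, so $t(C_{2k}, G) = \mathbb{E}_{x,y}\!\left[f(x,y)^2\right]$, where $f(x,y)$ is the probability that a uniform map of $P_{k+1}$ with endpoints sent to $x, y$ is a homomorphism. Jensen gives $t(C_{2k}, G) \ge (\mathbb{E}_{x,y} f)^2 = t(P_{k+1}, G)^2$, and the tree case applied to the path $P_{k+1}$ yields $t(P_{k+1}, G) \ge t(K_2,G)^k$, so $t(C_{2k}, G) \ge t(K_2,G)^{2k}$ as required. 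Finally, for $K_{s,t}$ with parts $\{x_1,\ldots,x_s\},\{y_1,\ldots,y_t\}$, the conditional independence of the $y_j$'s given $\vec{x}$ factorizes the expectation as $t(K_{s,t}, G) = \mathbb{E}_{\vec{x}}\!\left[g(\vec{x})^t\right]$, where $g(\vec{x}) = \mathbb{E}_y\prod_i \mathbf{1}[x_i y \in E(G)]$; Jensen gives $t(K_{s,t}, G) \ge (\mathbb{E}_{\vec{x}} g)^t$; Fubini then rewrites $\mathbb{E}_{\vec{x}} g = \mathbb{E}_y[p(y)^s]$; and a second application of Jensen gives $\mathbb{E}_y[p(y)^s] \ge t(K_2,G)^s$. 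Chaining these bounds yields $t(K_{s,t}, G) \ge t(K_2,G)^{st} = t(K_2,G)^{|E(K_{s,t})|}$.
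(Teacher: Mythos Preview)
The paper does not supply a proof of this proposition; it is stated with citations as a known result, so there is no paper-side argument to compare against. Your arguments for $K_{s,t}$ and for even cycles are correct and standard: the two nested applications of Jensen for $K_{s,t}$ are exactly the classical proof, and reducing $t(C_{2k},G)$ to the square of a path density via Cauchy--Schwarz (then invoking the path case) is the usual route.

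The tree case, however, has a genuine gap. Your ``strengthened'' hypothesis
\[
\mathbb{E}_{\phi'}\bigl[p(\phi'(u))^{s}\,\mathbf{1}\{\phi'\text{ is a hom.\ of }T'\}\bigr]\ \ge\ t(K_2,G)^{|E(T')|+s}
\]
is not actually a strengthening: it is \emph{equivalent} to the unweighted Sidorenko inequality for the tree obtained from $T'$ by attaching $s$ fresh leaves at $u$, a tree with $|E(T')|+s$ edges. So restating the goal this way buys nothing unless you can genuinely maintain it inductively on $|E(T')|$ alone. For instance, with $T'$ the two-edge path rooted at an endpoint and $s=1$, your ``strengthened'' statement is exactly Sidorenko for the three-edge path $P_4$; an inductive hypothesis over trees with at most one edge (which yields only stars) does not cover this. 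More concretely, the step you describe --- ``the weaker hypothesis on the trunk at $u$ combined with Jensen for $y\mapsto y^{s}$'' --- would seem to require
\[
\mathbb{E}_x\bigl[p(x)^{s}\,h_{T',u}(x)\bigr]\ \ge\ \mathbb{E}_x\bigl[p(x)^{s}\bigr]\cdot \mathbb{E}_x\bigl[h_{T',u}(x)\bigr],
\]
and this positive-correlation inequality can fail: take $G$ to be the disjoint union of a large star and a clique $K_m$ with $m\ge 4$, and $T'$ the two-edge path rooted at an endpoint; a short computation shows $p(x)$ and $h_{T',u}(x)$ are negatively correlated. A complete proof of the tree case needs an additional idea --- for paths one typically uses the log-convexity $(\mathbf{1}^{\top}A^{k}\mathbf{1})^{2}\le(\mathbf{1}^{\top}A^{k-1}\mathbf{1})(\mathbf{1}^{\top}A^{k+1}\mathbf{1})$ (Blakley--Roy), which also suffices to rescue your even-cycle argument; for general trees one proceeds by a more careful H\"older/Cauchy--Schwarz argument along the edges of the tree, not by a bare product-splitting.
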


Using Equations (\ref{eq:hominj1}) and (\ref{eq:hominj2}),
we can translate statements about homomorphisms into asymptotic 
statements about the combined frequency of particular sets of subgraphs. 
We can also translate statements about frequencies of subgraphs
to frequencies of their complements using the following basic
fact.

\begin{lemma}
If for graphs $F_1, \ldots F_\ell$, coefficients $\alpha_i \in \mathbb R$, and a function $f$,
$$\alpha_1 s(F_1,G) + \ldots + \alpha_\ell s(F_\ell,G) \ge f(s(K_2,G)), \text{ } \forall G,$$
then 
$$\alpha_1 s(\comp F_1,G) + \ldots + \alpha_\ell s(\comp F_\ell, G) \ge f(1-s(K_2,G)), \text{ } \forall G.$$
\end{lemma}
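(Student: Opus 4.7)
The plan is to exploit the natural involution $G \mapsto \comp{G}$ on graphs, which is compatible with the notion of induced subgraph frequency. The core observation I would establish first is the identity $s(F,G) = s(\comp{F},\comp{G})$ for every pair of graphs $F,G$ with $|V(F)| \le |V(G)|$. This follows because a $k$-node subset $S \subseteq V(G)$ induces a subgraph isomorphic to $F$ in $G$ if and only if the same subset $S$ induces a subgraph isomorphic to $\comp{F}$ in $\comp{G}$: the induced subgraph on $S$ in $\comp{G}$ is precisely the complement (on the same vertex set) of the induced subgraph on $S$ in $G$, and complementation of labelled graphs on $k$ vertices is a bijection that preserves isomorphism classes. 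So the number of favorable $k$-subsets, and hence the frequency, is unchanged.

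Next I would record the trivial specialization to edges: $s(K_2,\comp{G}) = 1 - s(K_2,G)$, since the edge density of $\comp{G}$ is exactly the non-edge density of $G$.

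With these two identities in hand, the lemma is immediate. Given any graph $G$, apply the hypothesis to the graph $\comp{G}$ in place of $G$:
\[
\alpha_1 s(F_1,\comp{G}) + \cdots + \alpha_\ell s(F_\ell,\comp{G}) \ge f(s(K_2,\comp{G})).
\]
Substituting $s(F_i,\comp{G}) = s(\comp{F_i},G)$ on the left and $s(K_2,\comp{G}) = 1 - s(K_2,G)$ on the right yields exactly the desired inequality. Since $G$ was arbitrary, the conclusion holds for all $G$.

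There is no serious obstacle here; the only point one must be careful about is verifying that the complementation bijection on $k$-subsets correctly sends induced copies of $F$ to induced copies of $\comp{F}$ at the level of \emph{unlabelled} isomorphism classes, which is what $s(\cdot,\cdot)$ counts. This is immediate from the fact that if two labelled graphs on the same vertex set are isomorphic, then so are their complements (any isomorphism of $F$ with the induced subgraph on $S$ is simultaneously an isomorphism of $\comp{F}$ with the complement of the induced subgraph on $S$). The rest is just a substitution.
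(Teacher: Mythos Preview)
Your proof is correct and follows essentially the same approach as the paper: both rely on the identity $s(F,G)=s(\comp F,\comp G)$, apply the hypothesis with $\comp G$ in place of $G$, and then use $s(K_2,\comp G)=1-s(K_2,G)$ to rewrite the right-hand side. Your version is slightly more detailed in justifying the complementation identity, but the argument is the same.
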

\begin{proof}
Note that $s(F,G) = s(\comp F, \comp G)$. Thus if 
$$ \alpha_1 s(\comp F_1, \comp G) + \ldots + \alpha_\ell s(\comp F_\ell, \comp G) \ge f(s(\comp K_2,\comp G)), \text{ } \forall G,$$
then 
$$ \alpha_1 s(\comp F_1, G) + \ldots +  \alpha_\ell s(\comp F_\ell, G) \ge f(s(\comp K_2, G)), \text{ } \forall G,$$
where $s(\comp K_2, G)=1-s(K_2,G)$.
\end{proof}

\omt{
This lemma imples that the Kruskal-Katona and Moon-Moser bounds for 
complete graphs also furnish bounds for empty graphs, and each 
Sidorneko-type bound on a set of subgraph frequencies also 
furnishes bounds on the set of complementary graphs.
}

\begin{figure*}[t]
\begin{center}
\includegraphics[width=0.9\linewidth]{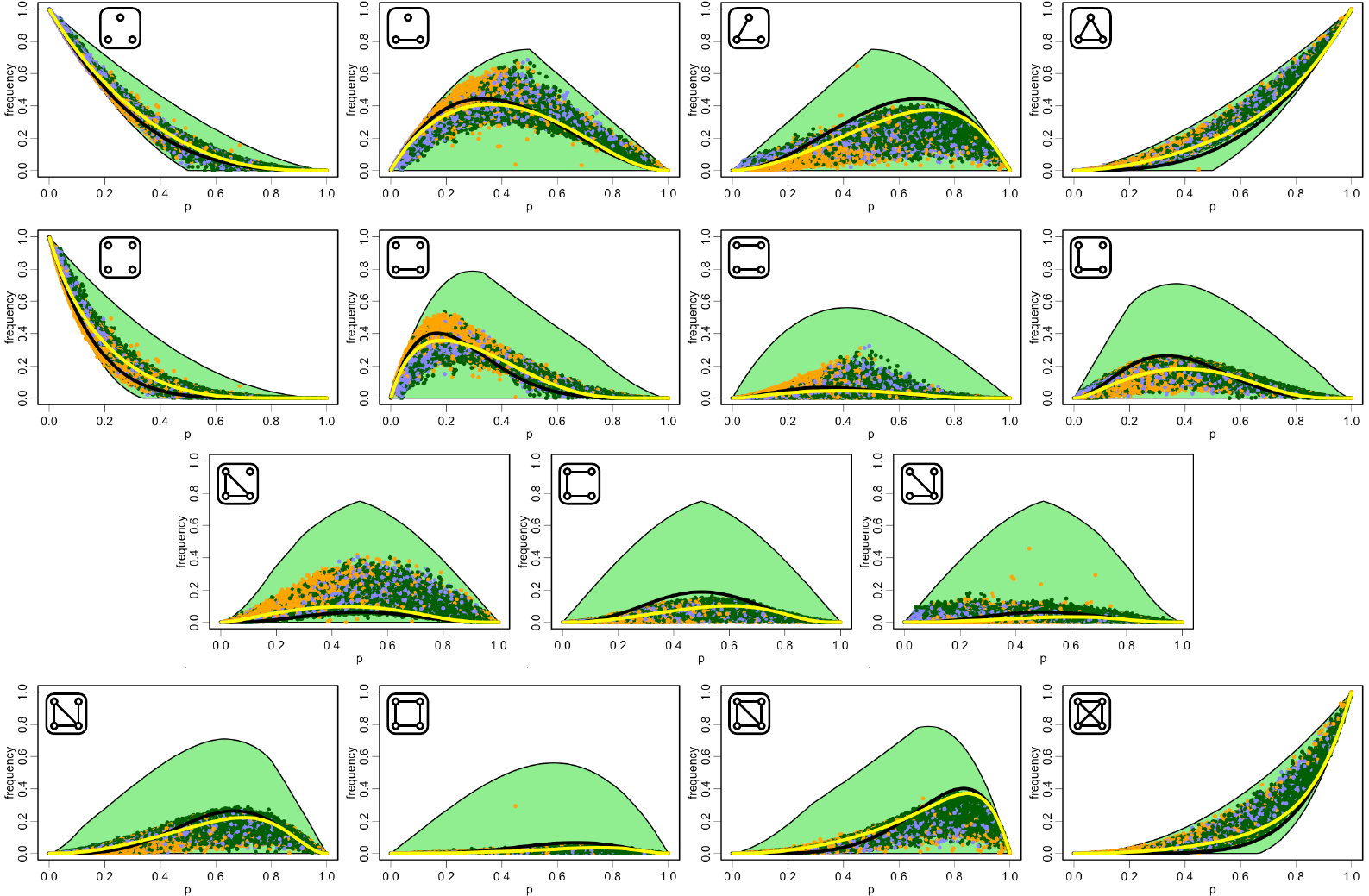} 
\caption{Subgraph frequencies for 3-node and 4-node subgraphs
 as function of edge density $p$. 
The light green regions denote the asymptotically feasible region
found via the linear program. The empirical frequencies are as in 
Figure \ref{f:triadplambda}. 
The black curves illustrate $\Gnp$,
while the yellow curves illustrate the fit triadic closure model. 
\vspace*{-0.2in}
}
\label{f:bounds}
\end{center}
\end{figure*}

\subsection{An LP for subgraph frequency bounds}

In the previous section, we reviewed linear constraints between 
the frequencies of subgraphs of different sizes, and upper and lower
bounds on graph homomorphism densities with applications to
subgraph frequencies. We will now use these constraints to assemble
a linear program capable to mapping out bounds on the extremal
geography of the subgraph space we are considering. To do this,
we will maximize and minimize the frequency of each individual
subgraph frequency, subject to the constraints we have just catalogued.

We will focus our analysis on the cases $k=3$, the triad frequencies,
and $k=4$, the quad frequencies. 
Let $x_1, x_2, x_3, x_4$ denote the subgraph frequencies 
$s(\cdot,G)$ of the four possible 3-vertex undirected graphs, 
ordered by increasing edge count.
\begin{program}
The frequency $x_i$ of a 3-node subgraph in any graph $G$ with 
edge density $p$ is bounded  asymptotically (in $|V(G)|$) by  
$ \max / \min x_i $ subject to $x_i \ge 0, \forall i$ and:
\begin{align}
x_1 + x_2 + x_3 + x_4 = 1, && \frac{1}{3} x_2 + \frac{2}{3} x_3 + x_4 = p, 
\label{p3:1} \\
x_4\le p^{3/2}, &&  x_1 \le (1-p)^{3/2}, 
\label{p3:2} \\
x_4\ge p(2p-1) && p \ge 1/2,
\label{p3:3} \\
x_1\ge (1-p)(1-2p) && p \le 1/2,
\label{p3:4} 
 \\
 (1/3) x_3 + x_4 \ge p^2, && x_1 + (1/3) x_2 \ge (1-p)^2.
 \label{p3:5}
\end{align}
\end{program}
Here the equalities in (\ref{p3:1}) derive from the linear constraints,
the constraints in (\ref{p3:2}) derive from Kruskal-Katona,
the constraints (\ref{p3:3}-\ref{p3:4}) derive from Moon-Moser,
and the constraints in (\ref{p3:5}) derive from the Sidorenko tree bound. 
More generally, we obtain the following general linear program
that can be used to find nontrivial bounds for any subgraph
frequency: 

\begin{program}
The frequency $f_F$ of a $k$-node subgraph $F$ in any graph $G$ with 
edge density $p$ is bounded  asymptotically (in $|V(G)|$) by  
$ \max / \min f_F$, 
subject to $Af_F=b(p), Cf_F\le d(p)$, appropriately assembled.
\end{program}

From Program 1 given above it is possible to derive a simple upper bound
on the frequency of the 3-node-path (sometimes described in the 
social networks literature as the ``forbidden triad'', as mentioned earlier).

\begin{proposition}
\label{forbidden}
The subgraph frequency of the 3-node-path $F$ obeys $s(F,G) \le 3/4 + o(1), \forall G$.
\end{proposition}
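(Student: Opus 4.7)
The plan is to combine the Sidorenko tree bound for the path $P_3$ with its complementary form and then use the simplex constraint that the 3-node subgraph frequencies sum to one. In the notation of Program 1, the forbidden triad (the 3-node path, i.e.\ the 3-vertex graph with two edges) is exactly $x_3$, so the goal is to show $x_3 \le 3/4 + o(1)$ as $|V(G)| \to \infty$.

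I would proceed in four short steps. First, invoke the Sidorenko tree bound on $P_3$, which --- once translated from homomorphism density to subgraph frequency via equations (\ref{eq:hominj1})--(\ref{eq:hominj2}) --- yields $(1/3)x_3 + x_4 \ge p^2 + o(1)$, the first inequality in (\ref{p3:5}). Second, apply the complement lemma to this bound, noting that $\comp{P_3}$ is the graph with a single edge and an isolated vertex, to obtain $x_1 + (1/3)x_2 \ge (1-p)^2 + o(1)$, the second inequality in (\ref{p3:5}). Third, add these two inequalities and subtract the sum from the simplex identity $x_1+x_2+x_3+x_4 = 1$ to deduce
$$
\tfrac{2}{3}(x_2 + x_3) \;\le\; 1 - p^2 - (1-p)^2 + o(1) \;=\; 2p(1-p) + o(1).
$$
Fourth, discard the nonnegative term $x_2$ on the left and maximize $3p(1-p)$ over $p \in [0,1]$: the maximum $3/4$ occurs at $p = 1/2$, giving $x_3 \le 3/4 + o(1)$.

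There is essentially no obstacle here, since all of the nontrivial machinery (Sidorenko, the complement lemma, and the $O(1/|V(G)|)$ slack in passing from homomorphism density to induced subgraph frequency) has already been assembled in the preceding subsection. The only conceptual content is recognizing that adding the two Sidorenko bounds for $P_3$ and its complement produces a right-hand side $p^2 + (1-p)^2$ whose deficit from $1$ is exactly the Bernoulli variance $2p(1-p)$, which peaks at $p=1/2$. As a sanity check the bound is sharp, attained by $K_{n/2,n/2}$: a uniformly random triple of vertices lies entirely on one side with probability $\to 1/4$ (contributing to $x_1$) and splits $2$--$1$ with probability $\to 3/4$ (inducing a $P_3$ through the singleton), so $x_3 \to 3/4$ and $x_2 \to 0$, showing that even the stronger combined bound $x_2 + x_3 \le 3p(1-p) + o(1)$ is tight.
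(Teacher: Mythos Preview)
Your argument is correct, but it uses a different pair of constraints from the paper's own proof. The paper combines the edge-density linear constraint $(1/3)x_2 + (2/3)x_3 + x_4 = p$ with the Moon--Moser lower bound $x_4 \ge p(2p-1) + o(1)$; subtracting gives $(1/3)x_2 + (2/3)x_3 \le 2p(1-p) + o(1)$, hence $x_3 \le 3p(1-p) + o(1)$, maximized at $p=1/2$. You instead add the two Sidorenko constraints in (\ref{p3:5}) and subtract from the simplex identity, obtaining the slightly stronger intermediate inequality $x_2 + x_3 \le 3p(1-p) + o(1)$ before dropping $x_2$. Both routes are equally short and both sit entirely inside Program~1; yours has the minor advantage of being symmetric under complementation (so no separate consideration of $p<1/2$ versus $p\ge 1/2$ is even implicitly needed), while the paper's route ties the bound directly to the triangle-density inequality, which is perhaps more thematically connected to the ``forbidden triad'' story. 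Your tightness check via $K_{n/2,n/2}$ matches the paper's.
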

\begin{proof}
Let $x_1, x_2, x_3, x_4$ again denote the subgraph frequencies 
$s(\cdot,G)$ of the four possible 3-vertex undirected graphs, 
ordered by increasing edge count, where $x_3$ is the frequency
of the 3-node-path. By the linear constraints, 
$$(1/3) x_2 + (2/3) x_3 + x_4 = p,$$
while by Moon-Moser, $x_4 + O(1/|V(G)|) \ge p(2p-1)$. Combining
these two constraints we have:
 $$x_3 \le 3p(1-p) + o(1).$$
The polynomial in $p$ is maximized at $p=1/2$, giving an 
upper bound of $3/4 + o(1)$.
\end{proof}
This bound on the ``forbidden triad'' is immediately apparent 
from Figure \ref{f:bounds} as well,
which shows the bounds constructed via linear programs for all 
3-node and 4-node subgraph frequencies.  
In fact, the subgraph frequency of the `forbidden'' 3-node-path 
in the balanced complete bipartite graph $K_{n/2,n/2}$, 
which has edge density $p=1/2$, 
is exactly $s(F,G) = 3/4$, demonstrating that this bound is 
asymptotically tight.  (In fact, we can perform 
a more careful analysis showing
that it is exactly tight for even $n$.)

Figure \ref{f:bounds} illustrates these bounds for $k=3$ and $k=4$. 
Notice that our empirical distributions
of subgraph frequencies fall well within these bounds, leaving large tracts of the 
bounded area uninhabited by any observed dense social graph. While the
bounds do not fully characterize the feasible region of subgraph
frequencies, the fact that the bound is asymptotically tight at $p=1/2$ for the
complete bipartite graph $K_{n/2,n/2}$ is important --- practically no empirical social graphs 
come close to the boundary, despite this evidence that it is feasibly approachable. 
We emphasize that an exact characterization of the 
feasible space would necessitate machinery at least as sophisticated 
as that used by Razborov.

In the next subsection we develop two more general observations about the subgraph
frequencies of arbitrary graphs, the latter of which illustrates that, with the exception
of clique subgraphs and empty subgraphs, it is always possible to be free
from a subgraph. This shows that the lower regions of the non-clique non-empty
frequency bounds in
Figure \ref{f:bounds} are always inhabitable, despite the fact that social graphs do
not empirically populate these regions.

\subsection{\hspace{-0.4cm} Bounding frequencies of arbitrary subgraphs}

\def\eps{\varepsilon}
\def\inj{{\rm inj}}

The upper bound for the frequency of the 3-node-path given 
in Proposition \ref{forbidden} amounted to simply combining appropriate
upper bounds for different regions of possible edge densities $p$. 
In this section, we provide two general bounds pertaining to the 
subgraph frequency of an arbitrary subgraph $F$. First, we show 
that any subgraph that is not a clique and is not empty must have a 
subgraph density bounded strictly away from one. Second, we show
that for every subgraph $F$ that is not a clique and not empty, it is
always possible to construct a family of graphs with any specified
asymptotic edge density $p$ that contains no induced copies of $F$.

With regard to Figures \ref{f:bounds}, 
the first of the results in this subsection uses the
Sidorenko tree bound to show that
in fact no subgraph other than the clique or the empty graph,
not even for large values of $k$, has
a feasible region that can reach a frequency of $1 - o(1)$.
The second statement demonstrates that 
it is always possible to be free of any subgraph that is not
a clique or an empty graph, even if this does not occur in the
real social graphs we observe.

\begin{proposition}
For every $k$, there exist constants $\eps$ and $n_0$ such that
the following holds.
If $F$ is a $k$-node subgraph that is not a clique and not empty, and
$G$ is any graph on $n \geq n_0$ nodes, then 
$s(F,G) < 1 - \eps$.
\end{proposition}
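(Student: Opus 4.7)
My plan is to apply the Sidorenko tree bound to the star $K_{1,k-1}$, situated in either $G$ or its complement $\comp{G}$, depending on the structure of $F$.

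The key structural observation is that $F$ cannot have both a universal vertex (of degree $k-1$) and an isolated vertex, because the universal vertex is adjacent to every other vertex. Since a universal vertex in $\comp{F}$ is precisely an isolated vertex in $F$, at most one of $F$ and $\comp{F}$ has a universal vertex. Because the hypothesis and conclusion are symmetric under complementation (using $s(F,G) = s(\comp{F}, \comp{G})$ and the fact that $\comp{F}$ is also neither empty nor a clique), we may assume without loss of generality that $F$ has no universal vertex. Let $T = K_{1,k-1}$. Any embedding of $T$ as a (not-necessarily-induced) subgraph of $F$ would require a universal vertex in $F$, so $F \not\supseteq T$.

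Suppose for contradiction that $s(F,G) \geq 1 - \varepsilon$, and set $p = s(K_2, G)$. By Proposition \ref{p:linear} with $F' = K_2$ and $\ell = k$, $p$ is a convex combination of edge densities of $k$-vertex graphs weighted by $s(H,G)$, so $p \geq (1-\varepsilon) d(F)$, where $d(F) = s(K_2, F) \geq 1/\binom{k}{2} > 0$ since $F$ is nonempty. Applying the Sidorenko tree bound to $T$ (a tree with $k-1$ edges) in $G$ gives $t(T,G) \geq t(K_2,G)^{k-1}$. Combined with (\ref{eq:hominj1}) and the asymptotic $t(K_2,G) = p + O(1/n)$, for $\varepsilon \leq 1/2$ this yields $\tinj(T,G) \geq (d(F)/2)^{k-1} - o(1)$, a positive constant depending only on $k$.

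Meanwhile, (\ref{eq:hominj2}) expresses $\tinj(T,G) = \sum_{H \supseteq T} \frac{\ext(T,H)\aut(H)}{k!} s(H,G)$ as a nonnegative linear combination over $k$-vertex graphs $H$ containing $T$, with coefficients bounded crudely by $k!$. Because $F \not\supseteq T$, the graph $F$ is absent from this sum, so $\tinj(T,G) \leq k! \sum_{H \neq F} s(H,G) \leq k!\,\varepsilon$. Combining these bounds gives $k!\,\varepsilon \geq (d(F)/2)^{k-1} - o(1)$, which fails for $\varepsilon$ smaller than a positive constant depending only on $k$ and $n \geq n_0$ large enough to absorb the $o(1)$ error. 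The main subtlety is bookkeeping the asymptotic corrections ($t$ vs $\tinj$, and $s(K_2,G)$ vs $t(K_2,G)$) so that they are dominated by the strictly positive lower bound from Sidorenko; this is where $n \geq n_0$ is needed. Since the set of non-empty, non-clique graphs on $k$ vertices is finite, the resulting $\varepsilon$ can be chosen uniformly in $F$.
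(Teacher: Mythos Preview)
Your proposal is correct and follows essentially the same approach as the paper: apply the Sidorenko tree bound to the star $K_{1,k-1}$, observe that at most one of $F,\overline{F}$ admits it as a subgraph, reduce by complementation, and derive a contradiction between the Sidorenko lower bound on $\tinj(T,G)$ and the upper bound forced by $s(F,G)\ge 1-\varepsilon$. The paper packages the final contradiction as a red/blue coloring of $k$-sets, whereas you invoke (\ref{eq:hominj2}) directly; these are the same computation, and in fact your coefficient bound can be sharpened from $k!$ to $1$ (since each $k$-set contributes at most $k!$ injective homomorphisms), matching the paper's count.
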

\begin{proof}
Let $S_k$ denote the $k$-node star ---
in other words the tree consisting of a single node linked
to $k-1$ leaves.
By Equation (\ref{eq:hominj1}), if $G$ has $n$ nodes, then 
$\tinj(S_k,G) \geq t(S_k,G) - c/n$ for an absolute constant $c$.
We now state our condition on $\eps$ and $n_0$ in the statement
of the proposition: 
we choose $\eps$ small enough and $n_0$ large enough so that
\begin{equation}
\frac{(1 - \eps)^k}{2 {k \choose 2}^{k-1}} 
    > \max\left(\eps, \frac{c}{n}\right).
\label{eq:eps-condition}
\end{equation}

For a $k$-node graph $F$,
let ${\cal P}(F)$ denote the property that
for all graphs $G$ on at least $n_0$ nodes, we have
$s(F,G) < 1 - \eps$.
Our goal is to show that ${\cal P}(F)$ holds for all $k$-node $F$
that are neither the clique nor the empty graph.
We observe that since $s(F,G) = s(\overline{F},\overline{G})$,
the property ${\cal P}(F)$ holds if and only if 
${\cal P}(\overline{F})$ holds.

The basic idea of the proof is to consider any $k$-node graph $F$
that is neither complete nor empty, and to argue that the star $S_k$
lacks a one-to-one homomorphism into at least one of $F$ or
$\overline{F}$ --- suppose it is $F$.
The Sidorenko tree bound says that $S_k$ must have a non-trivial
number of one-to-one homomorphisms into $G$;
but the images of these homomorphisms must be places where 
$F$ is not found as an induced subgraph, and this puts an
upper bound on the frequency of $F$.

We now describe this argument in more detail; we start by
considering any specific $k$-node graph $F$ that is
neither a clique nor an empty graph.
We first claim that there cannot be a 
one-to-one homomorphism from $S_k$ into both of $F$ and $\overline{F}$.
For if there is a one-to-one homomorphism from $S_k$ into $F$,
then $F$ must contain a node of degree $k-1$; 
this node would then be isolated in $\overline{F}$, and hence
there would be no
one-to-one homomorphism from $S_k$ into $\overline{F}$.
Now, since it is enough to prove that just one of 
${\cal P}(F)$ or ${\cal P}(\overline{F})$ holds, we choose 
one of $F$ or $\overline{F}$ 
for which there is no one-to-one homomorphism from $S_k$.
Renaming if necessary, let us assume it is $F$.

Suppose by way of contradiction that $s(F,G) \geq 1 - \eps$.
Let $q$ denote the edge density of $F$ ---
that is, $q = |E(F)| / {k \choose 2}$.
The edge density $p$ of $G$ can be written, using
Proposition \ref{p:linear}, as 
\begin{eqnarray*}
p = s(K_2,G) & = &  \sum_{\{H:|V(H)|=k\}} s(K_2,H)s(H,G) \\
& \geq & s(K_2,F) s(F,G) \geq q (1 - \eps).
\end{eqnarray*}

By a {\em $k$-set} of $G$, we mean a set of $k$ nodes in $G$.
We color the $k$-sets of $G$ according to the following rule.
Let $U$ be a $k$-set of $G$:
we color $U$ {\em blue} if $G[U]$ is isomorphic to $F$, and 
we color $U$ {\em red} if there is a one-to-one homomorphism from
$S_k$ to $G[U]$.
We leave the $k$-set uncolored if it is neither blue nor red under these rules.
We observe that no $k$-set $U$ can be colored both blue and red,
for if it is blue, then $G[U]$ is isomorphic to $F$, and hence there is no
one-to-one homomorphism from $S_k$ into $G[U]$.
Also, note that $s(F,G) \geq 1 - \eps$ is equivalent to saying
that at least a $(1 - \eps)$ fraction of all $k$-sets are blue.

Finally, what fraction of $k$-sets are red?
By the Sidorenko tree bound, we have 
$$t(S_k,G) \geq p^{k-1} \geq q^k (1 - \eps)^k 
   \geq \frac{(1 - \eps)^k}{{k \choose 2}^{k-1}},$$
where the last inequality follows from the fact that 
$F$ is not the empty graph, and hence 
$q \geq 1 / {k \choose 2}$.
Since $\tinj(S_k,G) \geq t(S_k,G) - c/n$, our condition on $n$
from (\ref{eq:eps-condition}) implies that 
$$\tinj(S_k,G) \geq \frac{(1 - \eps)^k}{2 {k \choose 2}^{k-1}} > \eps.$$
Now, let $\inj(S_k,G)$ denote the number of one-to-one homomorphisms
of $S_k$ into $G$; by definition, 
$$\tinj(S_k,G) = \frac{\inj(S_k,G)}{n (n-1) \cdots (n-k+1)}
    = \frac{\inj(S_k,G)}{k! {n \choose k}},$$
and hence
$$\inj(S_k,G) = k! {n \choose k} \tinj(S_k,G) > \eps k! {n \choose k}.$$
Now, at most $k!$ different one-to-one homomorphisms can map $S_k$
to the same $k$-set of $G$, and hence more than 
$\eps {n \choose k}$ many $k$-sets of $G$ are red.
It follows that the fraction of $k$-sets that are red is
$> \eps$; but this contradicts our assumption that
at least a $(1 - \eps)$ fraction of $k$-sets are blue, since
no $k$-set can be both blue and red.
\end{proof}

\begin{proposition}
Assume $F$ is not a clique and not empty. Then for each edge
density $p$ there exists a sequence $G^p_1, G^p_2, \ldots$ of asymptotic
edge density $p$ for which 
$F$ does not appear as an induced subgraph in any $G^p_i$.
Equivalently, $s(F,G^p_i) = 0, \forall i$.
\end{proposition}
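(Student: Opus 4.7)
The plan is a short case split on the structure of $F$, resting on one key structural observation: if $F$ is neither a clique nor empty, then $F$ must fail to be a disjoint union of cliques or fail to be a complete multipartite graph. To prove this dichotomy, I would argue by contradiction: suppose $F$ were both. In a disjoint union of cliques, edges lie within components; in a complete multipartite graph, edges lie between parts. So any two distinct vertices in a common component must lie in different multipartite parts, while any two vertices in different components must lie in the same part. If $F$ has at least two components and some component of size at least two, pick $u,v$ in such a component and $w$ in a different one; the two rules force $u,v$ into different parts yet $w$ into the same part as both, a contradiction. Hence $F$ must be a single clique or an empty graph.

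With the dichotomy in hand, I would exhibit an explicit family $G_n^p$ in each case. If $F$ is not a disjoint union of cliques, take $G_n^p = K_{m_n} \sqcup \overline{K_{n-m_n}}$ with $m_n = \lfloor n\sqrt{p}\,\rfloor$. Since every induced subgraph of a disjoint union of cliques is itself a disjoint union of cliques, no induced subgraph of $G_n^p$ is isomorphic to $F$, so $s(F, G_n^p) = 0$. The edge density is $\binom{m_n}{2}/\binom{n}{2} \to p$.

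If instead $F$ is not a complete multipartite graph, let $G_n^p$ be the complete multipartite graph with one part of size $m_n = \lfloor n\sqrt{1-p}\,\rfloor$ and the remaining $n - m_n$ vertices placed in singleton parts. Every induced subgraph of a complete multipartite graph is again complete multipartite, so no induced copy of $F$ appears, giving $s(F, G_n^p) = 0$. The edge density is $1 - \binom{m_n}{2}/\binom{n}{2} \to p$.

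The main (and only substantial) obstacle I expect is the structural dichotomy; once established, the two constructions and their density calculations are immediate. One could alternatively unify the two cases via the identity $s(F, G) = s(\overline{F}, \overline{G})$, since the dichotomy implies that at least one of $F$ or $\overline{F}$ is not a disjoint union of cliques, in which case the first construction (applied to whichever side fails) produces the desired graph, possibly after complementation.
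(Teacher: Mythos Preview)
Your proof is correct and is essentially the same as the paper's. The paper calls your graph $K_{m_n}\sqcup\overline{K_{n-m_n}}$ a \emph{near-clique} and splits cases on whether $F$ is a near-clique; when it is, the paper observes that $\overline{F}$ is not, and passes to complements---exactly the unification you sketch at the end via $s(F,G)=s(\overline{F},\overline{G})$. Your framing in terms of the hereditary classes ``disjoint unions of cliques'' and ``complete multipartite graphs'' (which are complements of one another) makes the heredity step slightly more transparent, but the constructions and logic coincide.
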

\begin{proof}
We call $H$ a {\em near-clique} if it has at most one
connected component of size greater than one, 
and this component is a clique.
For any $p \in [0,1]$, it is possible to construct an
infinite sequence $H^p_1, H^p_2, \ldots$ of
near-cliques with asymptotic density $p$, by simply 
taking the non-trivial component of each $H^p_i$ to be a
clique of the appropriate size.

Now, fix any $p \in [0,1]$, and let $F$ be any graph that is 
neither a clique nor an empty graph.
If $F$ is not a near-clique, then the required sequence
$G^p_1,G^p_2,\ldots$ is the sequence of near-cliques
$H^p_1,H^p_2,\ldots$, since all the induced 
subgraphs of a near-clique are themselves near-cliques.

On the other hand, 
if $F$ is a near-clique, then since $F$ is neither a clique nor
an empty graph, the complement of $F$ is not a near-clique. 
It follows that the required sequence $G^p_1,G^p_2,\ldots$
is the sequence of complements of the
near-cliques $H^{1-p}_1, H^{1-p}_2,\ldots$.
\end{proof}

Note that it is possible to take
an $F$-free graph with asymptotic density $p$ and append 
nodes with local edge density $p$ and random (Erd\H{o}s-R\'{e}nyi)
connections to obtain a graph with any intermediate subgraph frequency 
between zero and that of $\Gnp$. The same blending arguement can be
applied to any graph with a subgraph frequency above $\Gnp$ to again
find graphs with intermediate subgraph frequencies. In this way we
see that large tracts of the subgraph frequency simplex are fully feasible for
arbitrary graphs, yet by Figure \ref{f:bounds} are clearly not inhabited 
by any real world social graph.
 
\section{Classification of audiences}
\label{sec:classify}

The previous two sections characterize empirical 
and extremal properties of the space of subgraph frequencies, 
providing two complementary frameworks for understanding 
the structure of social graphs. 
In this section, we conclude our work with a 
demonstration of how subgraph frequencies can also provide
a useful tool for distinguishing between different categories of 
graphs. The Edge Formation Random Walk
model introduced in Section \ref{sec:triads} figures notably,
providing a meaningful
baseline for constructing classification features, contributing to the 
best overall classification accuracy we are able to produce.

Thus, concretely our classification task is to take a social graph and 
determine whether it is a node neighborhood, the set of people
in a group, or the set of people at an event.
This is a specific version of a broader characterization problem
that arises generally in social media --- namely how social audiences
differ in terms of social graph structure \cite{adamic2008}. Each of the
three graph types we discuss --- neighborhoods, groups, and events ---
define an audience with which a user may choose to converse. 
The defining feature of such audience
decisions has typically been their size --- as users choose to share
something online, do they want to share it publicly, with their friends, or
with a select subgroup of their friends? Products such as Facebook groups
exist in part to address this audience problem, enabling the creation of
small conversation circles. Our classification task is essentially asking:
do audiences differ in meaningful structural ways other than just size?

In Figure \ref{f:3d} and subsequently in Figure \ref{f:bounds}, we saw how
the three types of graphs that we study --- neighborhoods,
groups, and events --- are noticeably clustered around different 
structural foci in the space of subgraph frequencies. Figure \ref{f:bounds}
focused on graphs consisting of exactly 50-nodes, where it is visibly apparent
that both neighborhoods and events tend to have a lower edge density 
than groups of that size. Neighborhood edge density --- equivalent to
 the {\em local clustering coefficient} --- is known to generally decrease
 with graph size \cite{newman-networks-book,UKBM12},
but it is not clear that all three of the graph types we consider here 
should decrease at the same rate.

In Figure \ref{f:edgedens}, we see that in fact the three graph
types do not decrease uniformly, with the average edge density of
neighborhoods decreasing
more slowly than groups or events. Thus,
small groups are denser than neighborhoods while large groups are sparser, 
with the transition occurring at around 400 nodes. Similarly,
small event graphs are denser than neighborhoods while 
large events are much sparser, 
with the transition occurring already at around 75 nodes.

The two crossing points in Figure 6 suggest a curious challenge:
are their structural features of audience graphs that distinguish them
from each other even when they exhibit the same edge density?
Here we use the language of subgraph frequencies 
to formulate a classification task for 
classifying audience graphs based on subgraph frequencies. We compare
our classification accuracy to the accuracy achieved when also 
considering a generous vector of much more sophisticated graph features.
We approach this classification task using a simple logistic regression model.
While more advanced machine learning models capable of
learning richer relationships would likely produce 
better classification accuracies, 
our goal here is to establish that this vocabulary of
features based on subgraph frequencies can produce 
non-trivial classification results even in conjunction with
simple techniques such as logistic regression. 
Evaluating our features in other contexts such as  
graph matching \cite{Khan2011, Li2012, Vishwanathan2010}, where
frequencies of connected subgraphs have been used previously \cite{Shervashidze2009},  
would be interesting future work.

When considering neighborhood graphs, recall that we are not
including the ego of the neighborhoods as part of the graph,
while for groups and events the 
administrators as members of their graphs. 
As such, neighborhoods without their ego deviate
systematicallly from analogous audience graphs created as
groups or as events. In Figure \ref{f:edgedens} 
we also show the average edge density
of neighborhoods with their ego, adding one node and $n-1$ edges,
noting that the difference is small for larger graphs.

\begin{figure}[t]
\begin{center}
\vspace*{-0.15in}
\includegraphics[width=0.95\columnwidth]{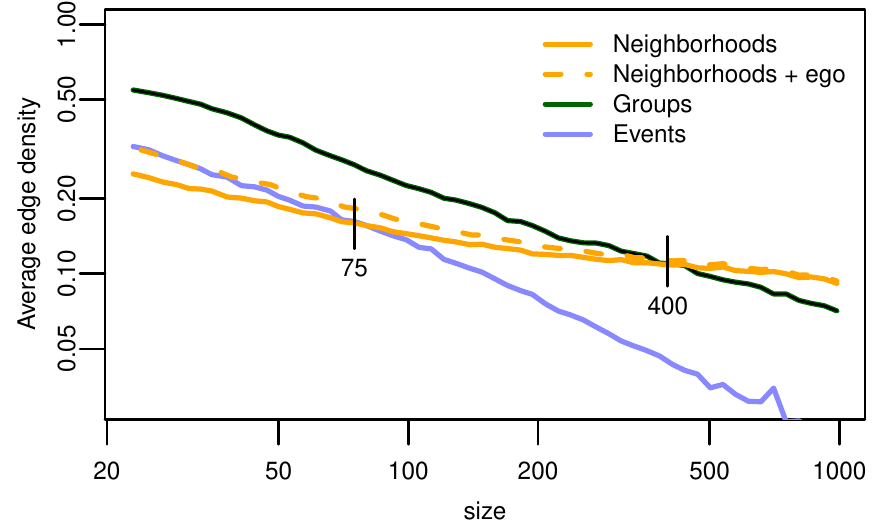} 
\vspace*{-0.15in}
\caption{Edge densities of neighborhoods, groups, and events as a function 
of size, $n$. 
When $n<400$, groups are denser then neighborhoods. 
When $n<75$, events are denser then neighborhoods.
}
\vspace*{-0.25in}
\label{f:edgedens}
\end{center}
\end{figure}

\xhdr{Classification features}
Subgraph frequencies has been the motivating
coordinate system for the present work, and will serve as our main
feature set. Employing the Edge Formation Random Walk model from
Section \ref{sec:triads}, we additionally describe each graph by its residuals
with respect to a backbone --- described by the parameter $\lambda$ --- 
fit to the complete unclassified training set. 

Features based on subgraph frequencies are local features, 
computable by examining only a few local nodes of the graph at a time. 
Note that the subgraph frequencies of arbitrarily large graphs can be
accurately approximated by sampling a small number of induced graphs.
Comparatively, it is relevant to ask: can these simple local features
do as well as more sophisticated {\em global} graph features? Perhaps
the number of connected components, the size of the largest component,
or other global features provide highly informative features for graph 
classification. 

To answer this question, we compare our classification 
accuracy using subgraph frequencies with the accuracy
we are able to achieve using a set of global
graph features. We consider:
\begin{itemize}
\addtolength{\itemsep}{-0.6\baselineskip}  
\item Size of the $k$ largest components, for $k=1,2$.
\item Size of the $k$-core, for $k=0,1,2,3$.
\item Number of components in the $k$-core, for $k=0,1,2$.
\item Degeneracy, the largest $k$ for which the $k$-core is non-empty. 
\item Size of the $k$-brace \cite{ugander2012}, for $k=1,2,3$.
\item Number of components in the $k$-brace, for $k=1,2,3$.
\end{itemize}

These features combine linearly to produce a rich set of
graph properties. For example, the number of components in the 
$1$-core minus the number of components in the $0$-core yields
the number of singletons in the graph. 

\xhdr{Classification results}
The results of the classification model are shown in Table \ref{t:acc},
reported in terms of classification accuracy --- the fraction of correct
classifications on the test data -- measured using
five-fold cross-validation on a balanced set of 10,000 instances. 
The classification tasks were chosen to be thwart
classification based solely on edge density, which indeed performs poorly. 
Using only 4-node subgraph frequencies and residuals, 
an accuracy of $77\%$ is achieved in both tasks.

In comparison, classification based on a set of global graph features
performed worse, achieving just $69\%$ and $76\%$ accuracy for the
two tasks.  Meanwhile, combining global and subgraph frequency
features performed best of all, with a classification accuracy of $81{-}82\%$.
In each case we also report the accuracy with and without residuals as features.
Incorporating residuals with respect to either a $\Gnp$ or 
Edge Formation Random Walk baseline consistently improved classification, and
examining residuals with respect to either baseline clearly provides a useful
orientation of the subgraph coordinate system for empirical graphs. 

\begin{table}
\centering
\vspace*{-0.15in}
\begin{tabular}{l||c|c}
Model Features & N vs. E, $n=75$ & N vs. G, $n=400$ \\
\hline 
\hline
Edges  & 0.487& 0.482 \\
\hline
Triads & 0.719& 0.647    \\
Triads + $R_G$ & 0.737& 0.673    \\
Triads + $R_\lambda$ & 0.736 & 0.668    \\
\hline
Quads & 0.751& 0.755  \\
Quads + $R_G$ & 0.765 & 0.769  \\
Quads + $R_\lambda$ & 0.765 & 0.769  \\
\hline
Global + Edges & 0.694& 0.763  \\
Global + Triads  & 0.785 & 0.766   \\
Global + Triads + $R_G$  & 0.784 & 0.766 \\
Global + Triads + $R_\lambda$ & 0.789 & 0.767  \\
Global + Quads & 0.797  & 0.812 \\
Global + Quads + $R_G$ & 0.807 & 0.815 \\
Global + Quads + $R_\lambda$ & 0.809 & 0.820 \\
\end{tabular}
\vspace*{-0.15in}
\caption{
Classification accuracy for N(eighborhoods), 
G(roups), and E(vents) on different sets of features.
$R_G$ and $R_\lambda$ denote the residuals with 
respect to a $\Gnp$ and stochastic graph model baseline,
as described in the text.
}
\vspace*{-0.15in}
\label{t:acc}

\end{table}

\section{Conclusion}

The modern study of social graphs has primarily focused
on the examination of the sparse large-scale structure of
human relationships. This global perspective has led to fruitful
theoretical frameworks for the study of many networked domains, notably
the world wide web, computer networks,
and biological ecosystems \cite{newman-networks-book}.
However, in this work we argue
that the locally dense structure of social graphs admit an additional 
framework for analyzing the structure of social graphs.

In this work, we examine the structure of social graphs through the
coordinate system of subgraph frequencies, developing two
complementary frameworks
 that allow us to identify both `social' structure and 
`graph' structure. The framework developed in Section \ref{sec:triads} 
enables us to characterize the apparent social forces guiding graph formation,
while the framework developed in Section \ref{sec:bounds} characterizes
fundamental limits of all graphs, delivered through combinatorial 
constraints.
Our coordinate system and frameworks are not only useful for developing intuition,
but we also demonstrate how they can be used to accurately classify graph types
using only these simple descriptions in terms of subgraph frequency. 

\xhdr{Distribution note}
Implementations
of the Edge Formation Random Walk equilibrium solver 
and the subgraph frequency bounds optimization program
are available from the first author's webpage.

\vspace*{-0.1in}
\xhdr{Acknowledgments}
We thank Peter Grassberger for helpful comments. This work was supported in part by NSF grants IIS-0910664 and IIS-1016099.
\vfill

\bibliographystyle{plain}


\end{document}